\DeclareMathOperator*{\argmin}{arg\,min}
\DeclareMathOperator{\Tr}{Tr}
\newmdenv[leftmargin=\dimexpr-0.4em, innerleftmargin=0.5em,
rightmargin=\dimexpr-0.4em, innerrightmargin=0.5em,
linewidth=2pt,linecolor=red, topline=false, bottomline=false,
innertopmargin=0pt,innerbottommargin=0pt,skipbelow=0pt,skipabove=0pt,%
]{notex}
\newenvironment{note}%
{\vskip\dimexpr\dp\strutbox-\prevdepth\relax\notex\strut\ignorespaces}%
{\xdef\notetpd{\the\prevdepth}\endnotex\vskip-\notetpd\relax}
\let\oldtodo\todo
\DeclareDocumentCommand{\todo}{ O{} +g +d<> }{%
	\IfNoValueTF{#2}{\relax}{%
		\oldtodo[caption={#2},size=\footnotesize,#1]{\renewcommand{\baselinestretch}{1}\selectfont\sffamily#2\par}%
	}%
	\IfNoValueTF{#3}{\relax}{%
		\IfNoValueTF{#2}{
			\begin{note}%
				\begin{internallinenumbers}%
					\indent%
					#3%
				\end{internallinenumbers}%
			\end{note}%
		}{
			\vspace{-0\baselineskip}%
			\begin{note}%
				\begin{internallinenumbers}%
					\indent%
					#3%
				\end{internallinenumbers}%
			\end{note}%
		}%
	}%
}%
\def\Cbb{{\mathbb{C}}}
\def\Pbb{{\mathbb{P}}}
\def\Qbb{{\mathbb{Q}}}
\def\Ebb{\mathbb{E}}
\def\Qbf{{\bf Q}}
\def\Ubf{{\bf U}}
\def\Mbf{{\bf M}}
\def\Mbf{{\bf M}}
\def\Ibf{{\bf I}}
\def\t{^\top}
\def\bkE{{\rm I\kern-.17em E}}
\def\bk1{{\rm 1\kern-.17em l}}
\def\bkD{{\rm I\kern-.17em D}}
\def\bkR{{\rm I\kern-.17em R}}
\def\bkP{{\rm I\kern-.17em P}}
\def\bkZ{{\bf{Z}}}
\def\bkE{{\rm I\kern-.17em E}}
\def\bk1{{\rm 1\kern-.17em l}}
\def\bkD{{\rm I\kern-.17em D}}
\def\bkR{{\rm I\kern-.17em R}}
\def\bkP{{\rm I\kern-.17em P}}
		\def\bkE{{\rm I\kern-.17em E}}
		\def\bk1{{\rm 1\kern-.17em l}}
		\def\bkD{{\rm I\kern-.17em D}}
		\def\bkR{{\rm I\kern-.17em R}}
		\def\bkP{{\rm I\kern-.17em P}}
		\def\bkY{{\bf \kern-.17em Y}}
		\def\bkZ{{\bf \kern-.17em Z}}
		\def\bkC{{\bf  \kern-.17em C}}
\def\Bscr{{\cal B}}
\def\Dscr{{\cal D}}
\def\Jscr{{\cal J}}
\def\Lscr{{\cal L}}
\def\Hscr{{\cal H}}
\def\Pscr{{\cal P}}
\def\Qscr{{\cal Q}}
\def\Sscr{{\cal S}}
\def\Uscr{{\cal U}}
\def\Vscr{{\cal V}}
\def\Nscr{{\cal N}}
\def\eef{\;\textrm{if}\;}
\let\forallnew\forall
\renewcommand{\forall}{\forallnew\ }
\let\forall\forallnew
\def\b12{(\beta_1,\beta_2)}
\newcommand{\Real}{\ensuremath{\mathbb{R}}}
\newcommand{\inv}{^{-1}}
\def\dim{\mathop{\hbox{\rm dim}}}  
\def\exp{\mathop{\hbox{\rm exp}}}
\def\half  {{\textstyle{1\over 2}}}
\def\conv{{\rm conv}}
\def\hbar{\skew{4.2}\bar h}
\def\superstar{^{\raise 0.5pt\hbox{$\nthinsp *$}}}
\def\SUPERSTAR{^{\raise 0.5pt\hbox{$*$}}}
\def\lamstarT {\lambda^{\raise 0.5pt\hbox{$\nthinsp *$}T}}
\newlength{\noteWidth}
\long\def\notes#1{\ifinner
{\tiny #1}
\else
\marginpar{\parbox[t]{\noteWidth}{\raggedright\tiny #1}}
\fi\typeout{#1}}
 \def\notes#1{\typeout{read notes: #1}} 
\newcommand{\ie}{i.e.\@\xspace} 
\newcommand{\eg}{e.g.\@\xspace} 
\def\spose#1{\hbox to 0pt{#1\hss}}
\def\text #1{\hbox{\quad#1\quad}}
\newcommand{\pushright}[1]{\ifmeasuring@#1\else\omit\hfill$\displaystyle#1$\fi\ignorespaces}
\newcommand{\pushleft}[1]{\ifmeasuring@#1\else\omit$\displaystyle#1$\hfill\fi\ignorespaces}
\def\nthinsp{\mskip -2   mu}
\newcommand{\aak}[1]{  \ifthenelse{\boolean{showcomments}}
{ \textcolor{blue}{(AAK says:  #1)}} {}  }
\newtheorem{theorem}{Theorem}[section]
\newtheorem{proposition}[theorem]{Proposition}
\newcounter{example}
\renewcommand{\theexample}{\thesection.\arabic{example}}
\newcounter{remark}
\renewcommand{\theremark}{\thesection.\arabic{remark}}
		\def\bsp{\begin{split}}
		\def\beq{\begin{eqnarray}}
		\def\bal{\begin{align*}}
		\def\bc{\begin{center}}
		\def\be{\begin{enumerate}}
		\def\bi{\begin{itemize}}
		\def\bs{\begin{small}}
		\def\bS{\begin{slide}}
		\def\ec{\end{center}}
		\def\ee{\end{enumerate}}
		\def\ei{\end{itemize}}
		\def\es{\end{small}}
		\def\eS{\end{slide}}
		\def\eeq{\end{eqnarray}}
		\def\eal{\end{align*}}
		\def\esp{\end{split}}
		\def\qed{ \vrule height7.5pt width7.5pt depth0pt}  
	\def\cp2problem#1#2#3#4{\fbox
		 {\begin{tabular*}{0.9\textwidth}
			{@{}l@{\extracolsep{\fill}}l@{\extracolsep{6pt}}l@{\extracolsep{\fill}}c@{}}
				#1 & & $#4 $ 
			\end{tabular*}}}
		\renewcommand{\emph}[1]{\textbf{#1}}
		\def\bkE{{\rm I\kern-.17em E}}
		\def\bk1{{\rm 1\kern-.17em l}}
		\def\bkD{{\rm I\kern-.17em D}}
		\def\bkR{{\rm I\kern-.17em R}}
		\def\bkP{{\rm I\kern-.17em P}}
		\def\bkZ{{\bf{Z}}}
\newcommand {\beeq}[1]{\begin{equation}\label{#1}}
\newcommand {\eeeq}{\end{equation}}
\newcommand {\bea}{\begin{eqnarray}}
\newcommand {\eea}{\end{eqnarray}}
\def\texitem#1{\par\smallskip\noindent\hangindent 25pt
               \hbox to 25pt {\hss #1 ~}\ignorespaces}
\title{\bf The Quantum Advantage in Decentralized Control}
\author[1]{Shashank A. Deshpande\thanks{Corresponding author;\ MSC2020 : 93E20}} 
 \author[1]{Ankur A. Kulkarni} 
\affil[1]{Systems and Control Engineering,  Indian Institute of Technology Bombay, Mumbai 400076 \\ Emails: \texttt{shashank.deshpande.phy@iitb.ac.in, kulkarni.ankur@iitb.ac.in}}
\date{}
\begin{document}
\maketitle

\begin{abstract}
	It is known in the context of decentralised control that there exist control strategies consistent with the requirements of a given information structure, yet physically unimplementable through any amount of passive common randomness. This imposes a natural set of limitations on what is achievable through common randomness in both cooperative and competitive settings. We show that it is possible to breach these limitations with the use of quantum-physical architectures. In particular, we present a class of stochastic strategies that leverage quantum entanglement to produce strategic distributions which compose a strict superclass of strategies implemented through passive common randomness.  We investigate numerically, the `quantum advantage' offered by this new class over a parametric familiy of cooperative decision problems with static information structure. We demonstrate through variations across the parametric family that fundamental decision theoretic elements such as information and the cost determine the manifestation of quantum advantage in a given control problem. Our work motivates a novel decision and control paradigm with an enlarged space of control policies achievable by means of quantum architectures.
\end{abstract}
\noindent\textbf{\textit{Keywords}:} \textit{Quantum information, decentralized control, estimation, quantum correlation, entanglement.}
\section{Introduction}
In a decentralised control problem, each agent chooses an action conditioned on his information. A control strategy for the problem is thus a conditional distribution on the joint action space given the observations of all agents.  The information structure of the problem imposes further constraints on the control strategy and constrains it within a subset of the corresponding probability simplex. 

A widely studied space of strategies comprises of `mixed strategies' (common nomenclature in game theory literature) or strategies implemented using \textit{passive}, or pre-play common randomness. Here, all the agents have access to an externally provided random variable which does not correlate with any piece of information in the problem, in addition to any local randomization that the players may perform. Such strategies are physically implementable within the realm of classical physics, and  
hence, we also often refer to this strategic space as the space of \textit{classical strategies}.

Ananthram and Borkar \cite{ananthram2007commonrandom} show that for a problem with static information structure, strategies implemented using passive common randomness in general comprise a \textit{strict subset} of what is allowed by the information structure. In particular, in a static information structure, agents are only bound by the prohibition of in-play communication. Strategies that respect this are characterized by the conditional independence of the action of an agent and the observations of other agents, given the observation of the said agent. Prima-facie, any strategy satisfying this requirement ought to be allowed by a static information structure; we call such strategies \textit{no-signalling strategies}. 
Ananthram and Borkar \cite{ananthram2007commonrandom} show that set of no-signalling strategies is strictly larger than the strategies that the set of classical strategies. 
They perceive this strict inclusion to be a limitation of common randomness in decentralised control. Indeed, for a cooperative problem where the solution concept is the optimal cost, lower costs may be achievable for a given problem if access is allowed to the space beyond classical strategies, even while not violating the information structure. On the other hand, in competitive settings where the solution concept is the Nash equilibrium, appearance of new equilibria is possible (in team v/s team settings) as the agents access strategies beyond the set of mixed strategies. We refer the reader to \cite{ananthram2007commonrandom} for more.

The above limitation of passive common randomness is the starting point for our research. While passive common randomness can be implemented easily, there appears to be no general-purpose physically feasible way of implementing arbitrary no-signalling strategies. One is thus left with an unsatisfactory gap between what seems to be allowed by the information structure and what one can physically implement.
In this article, we demonstrate how quantum entanglement can be utilized as a resource to breach these limitations of common randomness. We show that entanglement assisted control architectures allow access to strategies beyond those achievable with the use of common randomness, without violating the information structure, which thereby allows cost improvements. In particular, we investigate a parametric family of static team decision problems where two decentralised agents collectively attempt to estimate an environmental variable based on their independent noisy observations of the variable. We find that while common randomness is ineffective at producing a better estimate through collaboration, entanglement enables a more complex collaboration among the agents, despite the lack of communication between them, and results in a strictly lower estimation error. We call this strict improvement, the \textit{quantum advantage}. In essence, the resulting \textit{spooky coordination at a distance} arising out of entanglement enlarges the space of joint distributions attainable by the agents beyond that attainable through classical strategies and thereby results in a lower cost.


Remarkably, our analysis  also reveals that entanglement assistance is not always useful for producing strict improvements over common randomness. We find that in our family of estimation problems, the agents' observations must be sufficiently informative of the natural state for a quantum resource to further enhance their classical estimate. Further, if one of the agents has an observation which is asymmetrically informative by more than a certain threshold, then the quantum advantage also disappears after this threshold.  Another decision-theoretic feature that characterizes the appearance of a quantum advantage is a parameter in the cost that captures the relative importance of correctly estimating different values of the environmental state. We find that the quantum advantage disappears if the parameter exits a certain range of values.

A cooperative decision problem presents itself as an optimisation of a linear objective in the conditional distributions (or occupation measures) subject to constraints imposed by the information structure (see, \eg,~\cite{kulkarni2014optimizer}). We find that for a static problem, the space accessible classically through common randomness as well as the set of no-signalling strategies are both polytopes, with the former strictly included within the latter. The space of quantum strategies is an open convex non-polytopic structure sandwiched between the two. 
We showcase some features of this relative geometry which provides for a geometric intuition to the fluctuations in the quantum advantage through the parametric family we examine.

\subsection{Randomization as a resource in decentralized control}
Though stochastic control problems are usually posed in the policy space, \eg, \cite{bertsekas00dynamic,witsenhausen_counterexample_1968,ho1980team,mahajan2012information} lifting to randomized strategies has often served a useful analytical tool~\cite{borkar88convex,manne60linear}. For problems with nonclassical information structures, it was shown that nonclassicality of the information structure implies the nonconvexity of the equivalent problem in the space of occupation measures \cite{kulkarni2014optimizer}, and convex relaxation-based approaches were introduced to solve and bound problems \cite{kulkarni2014optimizer,jose2015linear,wu11optimal}. Similar approaches have also been applied to problems in finite-blocklength communication~\cite{jose2017linear,jose2018improvedSW,jose2018shannon,matthews2012linear}. These instances notwithstanding, the main property that has been employed in each of the instances above is that randomization or lifting of the problem from the policy space to the space of occupation measures leads to \textit{equivalent problem}. In other words randomization merely results in an equivalent \textit{analytical reformulation} of the original policy-level problem.

In contrast, the problem posed over the space of quantum strategies is \textit{not} a reformulation of the original problem, but rather a generalization of it. While classical randomization, via passive common randomness, only serves as an analytical tool with no performance benefits in terms of cost, quantum randomization based on entanglement brings with it a distinct new \textit{resource} for decision making that allows for better performance than that achieved in the policy space. Geometrically, the space of occupation measures achievable through quantum strategies (also called the quantum ellitope) has extreme points that do not lie in the classical region. This geometric fact is borne out through the Bell and CHSH inequalities~\cite{clauser1969chsh}; that this is geometric fact is indeed also physical was shown through experiments that were awarded the Nobel prize last year. We note that quantum mechanical relaxations as convex relaxations were proposed  for bounding decentralized control problems in a recent survey~\cite{saldi2022geometry}.

This viewpoint of thinking of randomization as a resource in decentralized control leads to natural new questions, answers to some of which our work has provided glimpses to. How does one measure the ``amount'' of this resource? How does the performance change if the decision makers are provided ``more'' of this resource? What is the best way to use this new resource? Are there problems where the additional resource is redundant and classical strategies are just as good as quantum ones?  It appears that quantum strategies are not uniformly useful in control and a deeper understanding of the relation of problem structure and quantum advantage is needed. 
While the woods we are walking in are lovely, dark and deep, there are miles to go before we can address any of these questions satisfactorily. This article is a attempt towards initiating such an investigation.


\subsection{Organization}
This article is organised as follows. In Section \ref{sec:prelim}, we brush up on the prerequisites necessary for further development in this article. In Section \ref{sec:commonrandlim}, we formulate our decentralised control problem and demonstrate a limitation of common randomness within its context. In Section \ref{sec:quantumstrats}, we formulate the set of quantum strategies and demonstrate the thereby attained breach of the aforementioned limitation. We also present here, some general properties of the set of quantum strategies and later conclude this section with a numerical investigation of the role of information and the cost-structure in the manifestation of a quantum advantage. We offer our concluding remarks in Section \ref{sec:conclude}.

\section{Preliminaries}\label{sec:prelim}
We have two sets of preliminaries for this article. This section deals brushes up on the relevant prerequisites on team decision problems. An additional set of preliminaries on quantum mechanics required for this article is included in Appendix \ref{app:qmprelim}.
\subsection{Notation}
 Let $ \Cbb $ be the set of complex numbers. Recall that any $ x \in \Cbb $ may be written as $ x=a+ib $ where $ a,b \in \Real $. By $ x^* $ we denote the complex conjugate of $ x $, \ie, $ x^* = a-i b $ and by $ |x|  $ we denote its magnitude, $ |x|=a^2+b^2$. We denote the negation of $a\in\{0, 1\}$ by $\overline a$. For $a, b\in\{0, 1\}$, we denote the binary addition or `XOR' of $a$ and $b$ by $a\oplus b$. Similarly, binary `OR' and `AND' take the forms $a \vee b$ and $a.b$ respectively.
 $\Pscr(\Xi)$ denotes the set of probability distributions on $\Xi$, and $\Pscr(\Uscr|\Xi)$ denotes the set of conditional probability distributions on $\Uscr$ given an element in $\Xi$. $\delta(x, y):=\delta_{xy}$ denotes the Kronecker delta with $\delta_{xy}=1$ if $x=y$ and 0 otherwise. For a tuple $(a_i)_i$, $a_{-j}$ represents the tuple $(a_i)_{i\neq j}$.

\subsection{Static team decision problems}\label{sec:tdecisionprobs}

A static team theoretic decision problem comprises of the following~\cite{ho1980team}.
\begin{enumerate}[I]
\item \textit{Agents:}	A set $ \Nscr $ of players or agents or decision makers.
\item \textit{State of Nature:} A random vector $\xi_W\in\Xi_W$ with a known distribution $\Pbb(.)\in\Pscr(\Xi_W)$ denoting the \textit{state of nature}, also known as the \textit{environmental randomness}. $ \xi _W$ comprises of all the exogenous sources of noise in the system, including initial state, measurement noise and system noise.
\item\textit{Information:} Each decision maker $ i $ receives an imperfect measurement of the random vector $ \xi_W$. The information of decision maker $ i $ is denoted 
$$\xi_{i}:=\eta_i(\xi_W), \qquad\forall i\in\Nscr .$$
for some measurable function $ \eta_i $. We denote $ \Xi_{i} $ to be the space of $ \xi_{i}, $ whereby $ \eta_i: \Xi_W \rightarrow \Xi_{i}. $
Notice that $ \eta_i $ is a function only of $ \xi_W$ and is not affected by actions of the other decision makers. Thus we call this a problem of \textit{static information structure}. The joint conditional of $ \xi:=(\xi_{i})_{i\in\Nscr}$ given $ \xi_W$ is denoted $ \Pbb(\xi| \xi_W)$.
\item\textit{Action:} As a function of the information $ \xi_{i} $, the decision maker $ i $ is required to choose an action $u_i$ from a set $ \Uscr_i $. We assume each $ u_i $ to be a scalar without loss of generality, as a $d$ dimensional vector can be decomposed into $d$ separate decision makers.
\item\textit{Deterministic (pure) strategy:} The function  $\gamma_i: \Xi_{i}\to \Uscr_i$ that maps $ \xi_{i} $ to $ u_i $ is called a pure strategy, deterministic strategy, or simply \textit{strategy}, of player $ i $. \ie, we have,
$$u_i=\gamma_i(\xi_{i}),\qquad \forall i\in\Nscr$$
Let $ \Gamma_i $ be the set of functions $ \gamma_i. $
\item\textit{The cost function:} A function $\ell: \prod_{i\in\Nscr} \Uscr_i \times \Xi_W \to \mathbb{R}$ forms the cost function 
The goal is to find $\gamma :=(\gamma_i)_{i\in\Nscr} \in \Gamma :=\prod_{i\in\Nscr} \Gamma_i$ to solve,
\[ J^*_\Gamma := \min_{\gamma} J(\gamma), \]
where
$$J({\gamma}):=\Ebb [\ell(\gamma_1(\eta_1(\xi_W)),\hdots,\gamma_n(\eta_n(\xi_W)),\xi_W)].$$
\end{enumerate}
Notice that \textit{time} has no bearing on the problem owing to the static information structure. Also, for simplicity we will limit ourselves to the case where $ \Xi_W$, and all $\Uscr_i$ are finite.

We now discuss the spaces of strategies agents are allowed to play within the above information structure.
\subsubsection{Control strategies assisted by passive common randomness}
In the abovementioned space of deterministic strategies, $\Gamma$, each agent fixes a `strategy' $\gamma_i:\Xi_i\to\Uscr_i$ following which his action $u_i$ is produced as the function of his observation $\gamma_i(\xi_i)$. The tuple $\gamma=(\gamma_i)_{i\in\Nscr}$ can be thought of as a `local' deterministic strategy of the team. A natural extension is to allow the agents to locally randomize their choice of  action. In this space of `locally' randomized strategies (or as in the language of game theory, `behavioural' strategies), $\Dscr$, each agent $i\in\Nscr$ executes an action $u_i\in\Uscr_i$ sampled from  a conditional distribution $Q_i\in\Pscr(\Xi_i|\Uscr_i)$. The strategy $Q\in\Dscr$ is then expressible as 
\begin{equation}
	Q(u|\xi)=\prod_{i\in\Nscr} Q_i(u_i|\xi_i).
\end{equation}
A behaviourally randomized strategy produces conditionally `uncorrelated' actions given local observations as evident in the multiplicative separation above. A scheme to obtain correlations beyond this conditional independence is through externally provided passive common randomness. Herein, each agent $i\in \Nscr$ chooses his action $u_i$ randomly from a conditional $Q_i\in\Pscr(\Uscr_i|\Xi_i, \Omega)$, \ie  through a conditioning also on a random variable $\omega\sim \Phi(.)\in \Pscr(\Omega)$ in addition to his observation $\xi_i$. Here, $\Omega$ is assumed to be some discrete space of variables. A crucial assumption central to implementability of these strategies is the \textit{passivity} of the external randomness: $\omega$ is conditionally independent of all information in the problem, \ie, $\xi$ and $\xi_W$. We call this space of strategies the \textit{local polytope} $\Lscr$ where we will soon establish the polytopic nature of its geometry through Theorem \ref{thm:detoptimum}. First we emphasize that each joint strategy $Q\in\Lscr$ is expressible as
\begin{equation}\label{eq:commonrandom}
	Q(u|\xi)=\sum_{\omega\in\Omega} \Phi(\omega)\prod_i Q_i(u_i|\xi_i).
\end{equation}
Second, we notice that since  $\Phi(\omega)\geq 0$ for all $\omega$ and $\sum_{\omega\in\Omega} \Phi(\omega) =1$, we have the relation $\Lscr=\conv(\Dscr)$ by definition where $\conv(\bullet)$ denotes the convex hull of `$\bullet$' and we recall that $\Dscr$ is the space of locally random strategies. Under a joint strategy $Q$, the expected cost of a decision problem $D$ is then given by
\begin{equation}\label{eq:expcostD}
J(Q;D)=\sum_{u, \xi, \xi_W}\Pbb(\xi)\ell(u, \xi_W)Q(u|\xi).
\end{equation}
Hereon given a strategic space $\Sscr$ for a problem $D$, we denote  $J_{\Sscr}^*(D)=\inf_{Q\in\Sscr} J(Q;D)$.

\subsubsection{No-signalling strategies}
Any strategy $Q\in \Pscr(\Uscr|\Xi)$ obeys the positivity and normalisation constraints $Q(u|\xi)\geq 0\ \forall\ u, \xi$,   $\sum_u Q(u|\xi)=1\ \forall\ \xi$. In addition, the decentralised, static information structure of the problem imposes the `no-signalling' constraints on $Q$. In particular it is required that any agent's action is conditionally independent of the observation by another agent, given his own observation, i.e. $Q(u_i|\xi_i, \xi_{-i})=Q(u_i|\xi_i)$ for each $i\in \Nscr$. This conditional independence can be captured through the following linear equalities for each $i\in\Nscr$:
\begin{equation}\label{eq:nosignallingconst}
	\sum_{u_{-i}} Q(u_i, u_{-i}|\xi_i, \xi_{-i})=\sum_{u_{-i}} Q(u_i, u_{-i}|\xi_i, \xi'_{-i})\ \qquad \forall u_i, \xi_i, \xi_{-i}, \xi'_{-i}.
\end{equation} 
This asserts that the choice of conditional distribution of one agent given his information does not affect the outcome distribution of other agents, and thus the joint distribution respects the prohibition of \textit{super-luminal communication}. Since this set of distributions is characterised by a finite number of linear equalities and inequalities, it is indeed a polytope. We denote this set by $\mathcal{NS}$, and call it the `no-signalling' polytope. Strategies within the no-signalling polytope respect the static information structure in the problem since the no-signalling conditions \eqref{eq:nosignallingconst} prohibit any communication.

\subsubsection{Limitations of common randomness}\label{compdet}
A static decision problem considerably simplifies within the introduced strategic classes $\Gamma$, $\Lscr$ and $\Dscr$. In particular, the optimal cost over the whole union of these classes is attained an optimal deterministic strategy. This follows from the theorem presented below.
\begin{theorem}\label{thm:detoptimum}
i) The space of behaviourally random strategies is contained within the local polytope \ie $\Dscr\subseteq \Lscr$.\\
ii) Thereby, $J_{\Lscr}^*(D)=J_{\Dscr}^*(D)=J_{\Gamma}^*(D)$.
\end{theorem}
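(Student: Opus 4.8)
The plan is to dispatch part (i) straight from the definition of $\Lscr$, and to obtain part (ii) by sandwiching $\Dscr$ between $\Gamma$ and $\conv(\Gamma)$ and then invoking linearity of the cost in $Q$. For (i): every $Q\in\Dscr$ has the form \eqref{eq:commonrandom} with $\Omega$ a singleton and $\Phi$ the point mass on it, so $\Dscr\subseteq\Lscr$; equivalently this is immediate from the relation $\Lscr=\conv(\Dscr)$ recorded after \eqref{eq:commonrandom}. For (ii), I would first note the inclusions $\Gamma\subseteq\Dscr\subseteq\Lscr$: a pure strategy $\gamma=(\gamma_i)_{i\in\Nscr}$ is the behavioural strategy with $Q_i(u_i|\xi_i)=\delta(u_i,\gamma_i(\xi_i))$, giving the first inclusion, and the second is part (i). Monotonicity of the infimum over nested strategy classes then yields $J_{\Lscr}^*(D)\le J_{\Dscr}^*(D)\le J_{\Gamma}^*(D)$, so all the content of (ii) lies in the reverse inequality $J_{\Gamma}^*(D)\le J_{\Lscr}^*(D)$.

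The crux is to show $\Lscr=\conv(\Gamma)$. Since $\Gamma\subseteq\Dscr$ we get $\conv(\Gamma)\subseteq\conv(\Dscr)=\Lscr$, so it suffices to prove $\Dscr\subseteq\conv(\Gamma)$, from which $\Lscr=\conv(\Dscr)\subseteq\conv(\Gamma)$. Fix $Q\in\Dscr$ with factors $Q_i\in\Pscr(\Uscr_i|\Xi_i)$. Because $\Xi_i$ and $\Uscr_i$ are finite, $\Pscr(\Uscr_i|\Xi_i)$ is a product of probability simplices whose extreme points are exactly the deterministic maps in $\Gamma_i$, so $Q_i(u_i|\xi_i)=\sum_{\gamma_i\in\Gamma_i}\lambda^i_{\gamma_i}\,\delta(u_i,\gamma_i(\xi_i))$ for some convex weights $\lambda^i$. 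Expanding the product over $i$ and collecting terms,
\[
Q(u|\xi)=\prod_{i\in\Nscr}Q_i(u_i|\xi_i)=\sum_{\gamma\in\Gamma}\Big(\prod_{i\in\Nscr}\lambda^i_{\gamma_i}\Big)\prod_{i\in\Nscr}\delta(u_i,\gamma_i(\xi_i)),
\]
and since $\Gamma=\prod_{i\in\Nscr}\Gamma_i$ the coefficients $\prod_i\lambda^i_{\gamma_i}$ are nonnegative and sum to $1$, so $Q$ is a convex combination of the pure strategies; hence $Q\in\conv(\Gamma)$.

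To finish, recall from \eqref{eq:expcostD} that $Q\mapsto J(Q;D)$ is linear in $Q$. Since $\Xi_W$, and therefore each $\Xi_i$ and $\Uscr_i$, and hence $\Gamma=\prod_i\Gamma_i$, is finite, $\Lscr=\conv(\Gamma)$ is a polytope; a linear functional attains its minimum over a polytope at an extreme point, and every extreme point of $\conv(\Gamma)$ lies in $\Gamma$. Therefore $J_{\Lscr}^*(D)=\min_{\gamma\in\Gamma}J(\gamma;D)=J_{\Gamma}^*(D)$, which together with the inequality chain above forces $J_{\Lscr}^*(D)=J_{\Dscr}^*(D)=J_{\Gamma}^*(D)$. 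I do not expect a real obstacle here; the one step deserving care is the factorization displayed above — verifying that the product weights $\prod_i\lambda^i_{\gamma_i}$ genuinely form a probability vector indexed by $\Gamma$ and that the associated point masses are precisely the pure strategies $u_i=\gamma_i(\xi_i)$. It is also worth flagging that finiteness of $\Xi_W$ is exactly what licenses the ``minimum at a vertex'' argument; without it one would substitute a compactness/extreme-point argument.
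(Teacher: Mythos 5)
Your proof is correct, and part (i) is identical to the paper's (take $|\Omega|=1$ in \eqref{eq:commonrandom}). For part (ii), however, you take a genuinely different route. The paper argues in two hops: first $J^*_\Lscr(D)=J^*_\Dscr(D)$ from linearity of $J(\cdot\,;D)$ together with $\Lscr=\conv(\Dscr)$, and then $J^*_\Dscr(D)=J^*_\Gamma(D)$ by observing that the optimization over $\Dscr$ is multilinear in the factors $Q_i$ and hence attains its minimum at an extreme point, delegating the details of that coordinate-wise argument to \cite{kulkarni2014optimizer}. You instead prove the structural identity $\Lscr=\conv(\Gamma)$ outright, by decomposing each factor $Q_i$ as a convex combination of deterministic maps (extreme points of the product of simplices $\Pscr(\Uscr_i|\Xi_i)$) and expanding the product to exhibit any $Q\in\Dscr$ as a mixture of pure strategies with weights $\prod_i\lambda^i_{\gamma_i}$; the conclusion then follows from the elementary fact that a linear functional on a polytope is minimized at a vertex. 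Your factorization step is sound (the weights are nonnegative and sum to $1$, and the associated point masses are exactly the pure strategies), so there is no gap. What your approach buys is self-containedness — no appeal to the external multilinear-optimization argument — and it explicitly establishes $\Lscr=\conv(\Gamma)$, a fact the paper itself invokes later (in proving $\Lscr\subset\Qscr$) while citing this theorem, even though its own proof only records $\Lscr=\conv(\Dscr)$. What the paper's route buys is brevity and a statement ($J^*_\Dscr=J^*_\Gamma$ via multilinearity) that generalizes more readily to settings where one does not wish to enumerate the extreme points of $\Dscr$.
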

\begin{proof}
\textit{i)}	In \eqref{eq:commonrandom}, taking $|\Omega|=1$ recovers a strategy in $\Dscr$, thereby establishing that $\Dscr \subseteq \Lscr.$\\
\textit{ii)} Notice from \eqref{eq:expcostD} that the expected cost $J(Q;D)$  is a linear objective in $Q$. 
Since $\Lscr=\conv(\Dscr)$, we have $J^*_\Lscr(D)=J^*_\Dscr(D).$ Now the optimization over $\Dscr$ is a multilinear optimization, whose minimum is attained at an extreme point, \ie, in $\Gamma$ (see~\cite{kulkarni2014optimizer} for a detailed argument). Thus $J^*_\Gamma(D)=J^*_\Dscr(D)$ and the claim follows.
\end{proof}
Hence, even within the availability of common randomness, it is optimal for a team to play deterministically based on players' local information. 

There may, however, be a gap between $J^*_{\Nscr\Sscr}(D)$ and $J^*_\Lscr(D)$, pointing to a limitation of common randomness. We highlight this limitation of common randomness for static teams in the following section through an explicit example of a team problem. In particular, we show that there are no-signalling strategies that ourperform the deterministic optimality. This is precisely the limitation presented by Ananthram and Borkar \cite{ananthram2007commonrandom}.

\section{Problem Formulation}\label{sec:commonrandlim}
We now focus on a specific instance of decentralised estimation where two agents $B$ and $H$ attempt to jointly estimate the environmental state $\xi_W\sim Unif[\Xi_W]$ given independent, identically distributed noisy observations of the same. Let $\xi_B\in \Xi_B, \xi_H\in \Xi_H$ be their respective observations and $u_B\in \Uscr_B, u_H\in\Uscr_H$ be their actions.  
Each agent $i\in\{B, H\}$ picks his action $u_i$ given his observation $\xi_i$.  For simplicity of demonstration, we work with $\Xi_W=\Xi_B=\Xi_H=\Uscr_B=\Uscr_H=\{0,1\}$. The cost function of the problem is that of a decentralised estimation problem with a reward for producing a correct joint estimate of the environmental state, and is asymmetric in the value of this state. Let $f(u_B, u_H)$ be the joint estimate of the agents, where $f:\Uscr_B\times\Uscr_H\to \{0,1\}$. We take the following cost function,
\begin{equation} \label{eq:ldef}
\ell(u_B, u_H, \xi_W)=-\chi(\xi_W)\delta(\xi_W,f(u_B, u_H)),
\end{equation}
where $\chi(\xi_W) \in [0,\infty)$ weighs the estimation error as a function of the environmental state $\xi_W.$ 
We work with a symmetric, non-degenerate aggregator $f$ which obeys $f(a,b)=f(b,a)$ and $f(a,b)\neq f(\sim a, b)$. Upto relabelling of actions, it suffices to work with $f(u_B, u_H)=u_B\oplus u_H$. The joint distribution of $\xi_B,\xi_H,\xi_W$ is specified as follows through parameters $1/2\leq \lambda_i\leq 1$ for each $i\in\Nscr$, 
\begin{equation}\label{eq:priordist}
	\Pbb(\xi_B, \xi_H, \xi_W)=\Pbb(\xi_W)\Pbb(\xi_B|\xi_W)\Pbb(\xi_H|\xi_W);\qquad  \Pbb(\xi_i|\xi_W)=\begin{cases}
		\lambda_i & \xi_i=\xi_W\\
		1-\lambda_i & \xi_i\neq\xi_W
	\end{cases}\ \forall\ i\in\{B, H\}.
\end{equation}

\noindent The expected cost of our problem `$D\equiv (\lambda_B, \lambda_H, \chi(0), \chi(1))$' under a strategy $Q\in \Pscr(\Uscr_B\times\Uscr_H|\Xi_B\times\Xi_H)$ is thus given by
\begin{equation}
	J(Q;D)=-\sum_{u, \xi}\chi(\xi_W)\Pbb(\xi_B, \xi_H, \xi_W)\delta(\xi_W, u_B\oplus u_H) Q(u_B, u_H|\xi_B, \xi_H).
\end{equation}

\subsection{Deterministic optimal strategy}
From Theorem \ref{thm:detoptimum}, lowest cost achievable by common randomness, \ie, over $\Lscr$ is simply the optimal cost over deterministic strategies $\Gamma$. 
The following proposition characterizes the optimal deterministic strategy. 
\begin{proposition}\label{prop:classicalsol}
	Let $\lambda_H \geq \lambda_B$ and let
	\begin{equation}\label{eq:optclascost}
		\widehat{J}:=\min(-\chi(0)\Pbb(0), -\lambda_H\sum_{\xi_W}\Pbb(\xi_W)\chi(\xi_W), -\chi(1)\Pbb(1)).
	\end{equation}
	Then an optimal deterministic strategy for the decision problem above is given by:
	\begin{align}\label{eq:optclasstrat}
		\gamma^*_B(\xi_B)\equiv 0; \	\gamma^*_H(\xi_H)=\begin{cases}
			0&   \eef \widehat{J}=-\Pbb(0)\chi(0,)\\
			1&  \eef \widehat{J}=-\Pbb(1)\chi(1),\\
			\xi_H& \eef \widehat{J}=-\lambda_H\sum_{\xi_W} \Pbb(\xi_W)\chi(\xi_W).\\
		\end{cases}
	\end{align}
	Moreover $J^*_\Lscr(D) = \widehat{J}.$ 
\end{proposition}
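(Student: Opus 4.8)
The plan is to invoke Theorem~\ref{thm:detoptimum} to reduce the infimum over $\Lscr$ to a minimum over the finite set $\Gamma$ of deterministic strategy pairs, and then to enumerate $\Gamma$ through the structure of the XOR aggregator. First I would note that, after summing over $u_B,u_H$, the cost of a deterministic pair $(\gamma_B,\gamma_H)$ depends only on the \emph{aggregate estimate} $g(\xi_B,\xi_H):=\gamma_B(\xi_B)\oplus\gamma_H(\xi_H)$:
\[
	J(\gamma;D)=-\sum_{\xi_W,\xi_B,\xi_H}\chi(\xi_W)\,\Pbb(\xi_B,\xi_H,\xi_W)\,\delta\big(\xi_W,\,g(\xi_B,\xi_H)\big).
\]
Every $\gamma_i:\{0,1\}\to\{0,1\}$ has the form $\gamma_i(\xi_i)=c_i\oplus s_i\xi_i$ with $c_i,s_i\in\{0,1\}$, so $g$ is an affine map over $\Fbb_2$, namely $g(\xi_B,\xi_H)=c\oplus s_B\xi_B\oplus s_H\xi_H$ with $c=c_B\oplus c_H$; conversely each of the eight affine maps is realized by some pair. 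Hence it suffices to maximize
\[
	-J(\gamma;D)=\sum_{\xi_W}\chi(\xi_W)\,\Pbb(\xi_W)\,\Pbb\!\big(g(\xi_B,\xi_H)=\xi_W\mid\xi_W\big)
\]
over these eight $g$, and then read off the minimizing strategies.

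Next I would split into three cases by $(s_B,s_H)$, using the conditional law~\eqref{eq:priordist}. If $s_B=s_H=0$ then $g\equiv c$ and $-J=\chi(c)\Pbb(c)$, yielding the two candidate values $\chi(0)\Pbb(0)$ and $\chi(1)\Pbb(1)$. If exactly one of $s_B,s_H$ is $1$, say $g=c\oplus\xi_i$, then $\Pbb(g=\xi_W\mid\xi_W)$ equals $\lambda_i$ if $c=0$ and $1-\lambda_i$ if $c=1$, in both cases independent of $\xi_W$, so $-J\in\{\lambda_i,\,1-\lambda_i\}\cdot\sum_{\xi_W}\Pbb(\xi_W)\chi(\xi_W)$. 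If $s_B=s_H=1$ then $g=c\oplus\xi_B\oplus\xi_H$; from~\eqref{eq:priordist} one computes $\Pbb(\xi_B\oplus\xi_H=0\mid\xi_W)=p$ and $\Pbb(\xi_B\oplus\xi_H=1\mid\xi_W)=1-p$ for either value of $\xi_W$, where $p:=\lambda_B\lambda_H+(1-\lambda_B)(1-\lambda_H)\in[0,1]$, so $\Pbb(g=\xi_W\mid\xi_W)$ takes value $p$ at one of $\xi_W\in\{0,1\}$ and $1-p$ at the other, making $-J=\beta\,\chi(0)\Pbb(0)+(1-\beta)\chi(1)\Pbb(1)$ a convex combination with $\beta\in\{p,1-p\}$.

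The dominance step then closes the proof. From the third case, every ``correlated'' estimate gives $-J\le\max\big(\chi(0)\Pbb(0),\chi(1)\Pbb(1)\big)$, and the first case already lands in $\{\chi(0)\Pbb(0),\chi(1)\Pbb(1)\}$. In the second case, $\chi\ge0$ gives $\sum_{\xi_W}\Pbb(\xi_W)\chi(\xi_W)\ge0$, and the hypothesis $1/2\le\lambda_B\le\lambda_H$ gives $\max\{\lambda_B,1-\lambda_B,\lambda_H,1-\lambda_H\}=\lambda_H$, hence $-J\le\lambda_H\sum_{\xi_W}\Pbb(\xi_W)\chi(\xi_W)$. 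Therefore $\max_{\gamma}\big(-J(\gamma;D)\big)=\max\big(\chi(0)\Pbb(0),\;\chi(1)\Pbb(1),\;\lambda_H\sum_{\xi_W}\Pbb(\xi_W)\chi(\xi_W)\big)$, i.e.\ $\min_{\gamma}J(\gamma;D)=\widehat J$; the three maximizers $g\equiv0$, $g\equiv1$, $g=\xi_H$ are realized respectively by $(\gamma_B^*\equiv0,\gamma_H^*\equiv0)$, $(\gamma_B^*\equiv0,\gamma_H^*\equiv1)$, $(\gamma_B^*\equiv0,\gamma_H^*(\xi_H)=\xi_H)$, which is exactly~\eqref{eq:optclasstrat}. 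Combining with Theorem~\ref{thm:detoptimum} gives $J^*_\Lscr(D)=J^*_\Gamma(D)=\widehat J$.

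There is no analytic obstacle here; the only real content is the bookkeeping of the eight affine estimates, and in particular the observation that the constant and fully-correlated estimates are convex combinations of the two ``pure-bias'' values $\chi(0)\Pbb(0)$ and $\chi(1)\Pbb(1)$, while the one-observation estimates are dominated the moment $1/2\le\lambda_B\le\lambda_H$ is used. A mild point worth a sentence is that when two of the three candidate values coincide the optimizer is not unique, but~\eqref{eq:optclasstrat} still exhibits a valid optimal strategy under each of its stated cases, which is all that is claimed.
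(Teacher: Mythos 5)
Your proposal is correct, but it takes a genuinely different route from the paper's proof. The paper does not enumerate strategies: it takes an arbitrary deterministic $\gamma$, and whenever some component $\gamma_j$ is non-constant it marginalizes out $\xi_j$ alone, using $\Pbb(\xi_j=\xi_W|\xi_W)=\lambda_j$ and the non-degeneracy of the aggregator to show the resulting bracketed term is one of $-\lambda_j\chi(\xi_W)$ or $-(1-\lambda_j)\chi(\xi_W)$, whence $J(\gamma;D)\geq-\lambda_j\sum_{\xi_W}\Pbb(\xi_W)\chi(\xi_W)\geq-\lambda_H\sum_{\xi_W}\Pbb(\xi_W)\chi(\xi_W)$; the all-constant case is handled separately. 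You instead exploit the XOR structure to parameterize the aggregate estimate as one of the eight $\Fbb_2$-affine maps $g=c\oplus s_B\xi_B\oplus s_H\xi_H$ and enumerate by $(s_B,s_H)$. Both arguments are sound and both correctly invoke Theorem~\ref{thm:detoptimum} to pass from $\Lscr$ to $\Gamma$. What your enumeration buys is transparency: it makes explicit which strategy attains which candidate value, and it exposes the nice fact that the fully-correlated estimate $g=c\oplus\xi_B\oplus\xi_H$ yields a convex combination $\beta\,\chi(0)\Pbb(0)+(1-\beta)\chi(1)\Pbb(1)$ (with $\beta\in\{p,1-p\}$, $p=\lambda_B\lambda_H+(1-\lambda_B)(1-\lambda_H)$) and hence can never beat the better constant strategy --- a structural observation the paper's bound obscures. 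What the paper's argument buys is that it needs only the symmetry and non-degeneracy of $f$ (not the affine representation specific to XOR), and it avoids computing the joint correlation $p$ altogether by bounding after marginalizing a single coordinate. Your closing remark about non-uniqueness when candidate values tie is accurate and consistent with what the proposition actually claims.
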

\begin{proof}
Consider an arbitrary deterministic strategy $\gamma \in \Gamma$, and let $u_i:=\gamma_i(\xi_i)$ for an agent $i\in\{B, H\}$ and $\xi_i \in \Xi_i $. It is easy to check by direct evaluation that if $\gamma\equiv \gamma^*$, then $J(\gamma; D)=\widehat J$. Now suppose that $\gamma_j(0)\neq \gamma_j(1)$ for some $j\in\{B,H\}$. Denote $\delta(u_{-j}, \gamma_{-j}(\xi_{-j}))\Pbb(\xi_{-j}|\xi_W)=:R(u_{-j}, \xi_{-j}|\xi_W)$. Clearly, $\sum_{u_{-j,} \xi_{-j}} R(u_{-j}, \xi_{-j}|\xi_W)=1$. Now notice that
\begin{align*}
	J(\gamma; D)
	&=\sum_{\xi_W}\Pbb(\xi_W) \sum_{\xi_j,\xi_{-j}, u} \ell(u, \xi_W) \Pbb(\xi_j|\xi_W) \delta(u_j, \gamma_j(\xi_j))\times R(u_{-j}, \xi_{-j}|\xi_W).
\end{align*}
Now summing over $\xi_j$ and using $\Pbb(\xi_j=\xi_W|\xi_W)=\lambda_j$ gives
\begin{align*}
	J(\gamma;D)&= \sum_{\xi_W, u, \xi_{-j}}\Pbb(\xi_W)\left [\lambda_j\delta(u_j, \gamma_j(\xi_W))\ell(u, \xi_W) +(1-\lambda_j)\delta(u_j, \gamma_j(\overline{\xi_W}))\ell(u, \xi_W)\right]R(u_{-j}, \xi_{-j}|\xi_W)\\
	&= \sum_{\xi_W, u_{-j}, \xi_{-j}}\Pbb(\xi_W)\left [\lambda_j\ell(\gamma_j(\xi_W), u_{-j}, \xi_W)  +(1-\lambda_j)\ell(\gamma_j(\overline{\xi_W}), u_{-j}, \xi_W)\right]R(u_{-j}, \xi_{-j}|\xi_W),
\end{align*}
where we have used the definition of $\delta(\cdot,\cdot)$ to sum over $u_j$. Now from \eqref{eq:ldef}, and using that $\gamma_j(\xi_W) \neq \gamma_j(\overline{\xi_W})$ observe that the term inside the square brackets
must take one of the two values $\{-\lambda_j\chi(\xi_W), -(1-\lambda_j)\chi(\xi_W) \}$ for each $u_{-j},\xi_W.$ Since $\lambda_j \geq \half$, it follows that $-\lambda_j\chi(\xi_W)\leq -(1-\lambda_j)\chi(\xi_W)$ and hence,
\begin{align*}
	J(\gamma;D)\geq -\sum_{\xi_W, u, \xi_{-j}}\Pbb(\xi_W)R(u_{-j}, \xi_{-j}, \xi_W)\lambda_j \chi(\xi_W) = -\lambda_j\sum_{\xi_W}\chi(\xi_W)\Pbb(\xi_W) \geq -\lambda_H\sum_{\xi_W}\chi(\xi_W)\Pbb(\xi_W),
\end{align*}
where we have used that $\lambda_B \leq \lambda_H.$
This shows that for each deterministic $\gamma$ for which $\exists j\in\{B, H\}$ satisfying $\gamma_j(0)\neq\gamma_j(1)$, $J(\gamma;D)\geq \widehat J$. Otherwise if $\gamma$ is such that $\gamma_i(0)\equiv\gamma_i(1)$ for all $i\in\{0, 1\}$, then $u_B\oplus u_H\equiv c\in\{0, 1\}$ and consequently, $J(\gamma; D)\in \{-\chi(0)\Pbb(0), -\chi(1)\Pbb(1)\}$ so that $J(\gamma; D)\geq \widehat J$. Since the cost $\widehat{J}$ is attained by $\gamma^*$, we have 
$\widehat J=\min_{\gamma\in\Gamma} J(\gamma; D)$, and it follows that $\widehat J=J_\Lscr(D)$ from Theorem \ref{thm:detoptimum}. 
\end{proof}
\def\NS{{\Nscr\Sscr}}
We make the following observation about the optimal strategy above. Notice that the optimal strategy corresponds to at most one agent actively estimating $\xi_W$. \ie, $J_{\Lscr}^*(D)$ is attained either when both agents play a constant strategy by disregarding the observed information (in which case $J^*_\Lscr(D)\in\{-\chi(0)\Pbb(0), -\chi(1)\Pbb(1)\}$) or only the more informed agent (agent $H$ in our case) putting his information to use and the other ignoring his information (in which case $J^*_\Lscr(D)= -\lambda_H\sum_{\xi_W}\Pbb(\xi_W)\chi(\xi_W)$). Thus, for this problem, there is no benefit in the more informed agent correlating his estimate with that of the other (less or equally informed) agent. This is applicable when optimizing over the space $\Gamma$ or indeed over $\Lscr$, where agents are provided access to passive common randomness. 

However, there do exist \textit{no-signalling} strategies in which the estimates of both agents are nontrivially correlated, and improve upon the cost $\widehat{J}$ attained above. We show this in the following section.

\subsection{Optimality as allowed by the information structure}

We now study the $\Nscr\Sscr$ polytope for this problem with the aim of characterizing $J^*_{\Nscr\Sscr}$. The geometry of this polytope is well-known for the problem setting we consider, \ie, $|\Uscr_i|=|\Xi_i|\equiv 2$. In particular, it is known \cite{barrett2005nonlocal} that the polytope has $24$ vertices, $16$ of which correspond to the determinstic strategies and $8$ correspond to what we refer to as the non-local vertices. The following proposition specifies the geometry of this polytope.
\begin{proposition}\cite[pg 3, eq (6)-(7)]{barrett2005nonlocal}
	The no-signalling polytope $\Nscr\Sscr$ has 24 vertices out of which:\\
a) The sixteen local vertices of the no-signalling polytope that correspond to the deterministic strategies are given by
\begin{equation}
	\pi^{\alpha\eta\beta\delta}(u_B, u_H|\xi_B, \xi_H)=\begin{cases}
		1 & u_B=\alpha.\xi_B\oplus \beta\\
		& u_H=\eta.\xi_H\oplus \delta\\
		0 & \text{otherwise}
	\end{cases} \text{ where } \alpha, \beta, \eta, \delta\in \{0,1\}.  
\end{equation} 
b) The eight non-local vertices are denumerable as
\begin{equation}\label{eq:nsvertex}
	Q^{\alpha\beta\delta}(u_B, u_H|\xi_B, \xi_H)=\begin{cases}
		1/2 & u_B\oplus u_H=\xi_B.\xi_H\oplus \alpha.\xi_B\oplus \beta.\xi_H\oplus \delta\\
		0 & \text{otherwise}
	\end{cases} \text{ where } \alpha, \beta, \delta\in\{0,1\}.
\end{equation}
\end{proposition}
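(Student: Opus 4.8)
The statement is the standard description of the $2$-input, $2$-output bipartite no-signalling polytope, due to \cite{barrett2005nonlocal}, so one legitimate route is simply to invoke the reference; here I outline a self-contained argument. View $\mathcal{NS}$ as the set of vectors $\bigl(Q(u_B,u_H\mid\xi_B,\xi_H)\bigr)\in\Real^{16}$ cut out by the $16$ positivity inequalities $Q\ge 0$, the $4$ normalization equalities, and the no-signalling equalities \eqref{eq:nosignallingconst}; a short count (each party contributes two independent no-signalling equalities once the redundancy with normalization is removed) shows $\mathcal{NS}$ has affine dimension $8$. Since positivity supplies the only inequalities, every vertex is the intersection of the affine hull with at least $8$ linearly independent tight positivity constraints, i.e.\ at least $8$ of the $16$ entries of $Q$ vanish. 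The proof then has three parts: (a) the $24$ listed points lie in $\mathcal{NS}$; (b) each is a vertex; (c) there are no others.

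For (a): each $\pi^{\alpha\eta\beta\delta}$ is the product of a deterministic strategy for $B$ and one for $H$, hence automatically satisfies \eqref{eq:nosignallingconst}; for $Q^{\alpha\beta\delta}$ one checks that for every fixed $u_B$ and $(\xi_B,\xi_H)$ exactly one $u_H$ meets the prescribed XOR relation, so $\sum_{u_H}Q^{\alpha\beta\delta}(u_B,u_H\mid\xi_B,\xi_H)=\tfrac12$ independently of $\xi_H$, and symmetrically in $u_H$; normalization and positivity are then immediate. For (b): the $16$ points $\pi^{\alpha\eta\beta\delta}$ are $\{0,1\}$-valued, hence extreme points of the full simplex $\prod_{\xi_B,\xi_H}\Pscr(\Uscr_B\times\Uscr_H)$ and a fortiori of the subpolytope $\mathcal{NS}$. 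Each $Q^{\alpha\beta\delta}$ forces exactly $8$ entries to $0$; it remains to note that these $8$ vanishing conditions together with normalization and no-signalling have $Q^{\alpha\beta\delta}$ as their unique solution — the zeros pin, for each $(\xi_B,\xi_H)$, the support to the two outcomes with the prescribed value of $u_B\oplus u_H$, after which no-signalling and normalization force each of those to equal $\tfrac12$.

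Part (c) is the crux. I would reparametrize a no-signalling box by the single-party outcome biases $\langle A_x\rangle,\langle B_y\rangle\in[-1,1]$ (well-defined by no-signalling) and the four correlators $E_{xy}$, $x,y\in\{0,1\}$; since normalization and the marginals recover each $2\times 2$ block $Q(\cdot,\cdot\mid x,y)$ from its correlator, this is a faithful affine parametrization of the $8$-dimensional affine hull, under which positivity becomes the $16$ inequalities $1+s_A\langle A_x\rangle+s_B\langle B_y\rangle+s_As_BE_{xy}\ge 0$, $s_A,s_B\in\{\pm1\}$. Classify a vertex by how many of the four marginals lie in $\{-1,1\}$. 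If all four do, positivity forces $E_{xy}=\langle A_x\rangle\langle B_y\rangle$, recovering the $16$ deterministic boxes $\pi^{\alpha\eta\beta\delta}$. If all four lie in $(-1,1)$, one checks that no three of the four positivity inequalities in a given block $(x,y)$ can be tight, that an admissible tight pair forces either $(E_{xy},\,\langle A_x\rangle-\langle B_y\rangle)=(1,0)$ or $(E_{xy},\,\langle A_x\rangle+\langle B_y\rangle)=(-1,0)$, and — since a vertex needs $8$ independent tight inequalities, hence exactly two per block — that chaining the four resulting relations $\langle A_x\rangle=\pm\langle B_y\rangle$ around the four blocks is consistent at a $0$-dimensional face only when $\prod_{x,y}\sgn E_{xy}=-1$, whence all marginals vanish and all $|E_{xy}|=1$; this yields precisely the $2^4/2=8$ boxes $Q^{\alpha\beta\delta}$.

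The remaining possibilities — some but not all marginals extremal — must be excluded: an extremal marginal ``slaves'' the two blocks it touches, forcing their correlators to equal the opposite party's marginal, and a count then shows that strictly fewer than $8$ linearly independent positivity constraints can be tight, so no vertices arise. This last bookkeeping over the mixed cases is the main obstacle, although it is entirely elementary; alternatively one may bypass the case analysis with a direct vertex enumeration (e.g.\ Fourier--Motzkin elimination) of the $8$-dimensional polytope, or simply defer to \cite{barrett2005nonlocal}.
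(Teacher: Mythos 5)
The paper does not actually prove this proposition: it is stated as a citation to Barrett et al., and the only comment offered is that ``the proof \ldots involves a search algorithm over equality constraints \eqref{eq:nosignallingconst}.'' Your proposal therefore goes well beyond the paper by supplying a self-contained argument, and the route you choose --- reparametrizing a no-signalling box by the marginals $\langle A_x\rangle,\langle B_y\rangle$ and correlators $E_{xy}$, so that the $8$-dimensional affine hull is explicit and positivity becomes the $16$ inequalities $1+s_A\langle A_x\rangle+s_B\langle B_y\rangle+s_As_BE_{xy}\ge 0$ --- is a standard and correct one. Parts (a) and (b) are sound: the dimension count is right, the deterministic points are $\{0,1\}$-valued and hence extreme, and your observation that the eight zeros of $Q^{\alpha\beta\delta}$ together with no-signalling and normalization pin the remaining entries to $\tfrac12$ does establish extremality. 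In part (c) the block-by-block analysis is also correct: three tight inequalities in one block force a marginal to $\pm 1$, an admissible tight pair forces $|E_{xy}|=1$ and $\langle A_x\rangle=E_{xy}\langle B_y\rangle$, and chaining around the four blocks gives $\langle A_0\rangle=\bigl(\prod_{x,y}E_{xy}\bigr)\langle A_0\rangle$, so a zero-dimensional face requires $\prod_{x,y}E_{xy}=-1$ and vanishing marginals, yielding exactly the $8$ boxes $Q^{\alpha\beta\delta}$. The one place your argument is not actually carried out is the exclusion of the mixed cases (some but not all marginals extremal); you assert the count fails there but do not do the bookkeeping. That step is genuinely elementary but it is where the remaining work lives, so as written the proof is a complete and correct skeleton with one deferred case --- which you honestly flag and for which you offer the legitimate fallbacks of direct vertex enumeration or the citation. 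Relative to the paper, which buys the result wholesale from the literature, your approach buys transparency about \emph{why} the non-local vertices have uniform $\tfrac12$ weights and why there are exactly eight of them; the cost is the case analysis the paper avoids.
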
 
The proof of the above characterization involves a search algorithm over equality constraints \eqref{eq:nosignallingconst}. We refer to \cite{barrett2005nonlocal} for further details. 

The following proposition shows that there are instances of the problem $D$ where $J^*_\NS(D)< J^*_\Lscr(D).$ In particular, this implies that $\Lscr \subset \NS$ (a strict inclusion), a fact that was also seen in the counterexample presented by Ananthram and Borkar~\cite{ananthram2007commonrandom}.
\begin{proposition}\label{prop:nosigadvexists}
There exist $\chi(0), \chi(1)\in[0,\infty)$ and $\lambda_B, \lambda_H \in [0, 1]$ such that
\begin{equation}
	\min_{\alpha, \beta, \delta  \in\{0, 1\}} J(Q^{\alpha\beta\delta}; D)< \min_{\alpha,\beta,\gamma,\delta\in\{0, 1\}} J(\pi^{\alpha\beta\gamma\delta};D).
\end{equation}
Hence, $\exists\ \chi(0), \chi(1)\in\Real^+, \lambda_B, \lambda_H \in [0, 1]$ such that {$J^*_{\NS}(D)<J^*_\Lscr(D)$}.
\end{proposition}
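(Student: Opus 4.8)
The plan is to exhibit an explicit numerical instance of the problem $D$, evaluate the cost of both families of no-signalling vertices in closed form, and verify the strict inequality. Since the proposition only claims existence, a single well-chosen parameter choice suffices; I would aim for the symmetric case $\lambda_B=\lambda_H=\lambda$ and $\chi(0)=\chi(1)=\chi$ (which can then be normalized, say $\chi=1$), since this is where the algebra is cleanest and where one expects the no-signalling correlations to help the most (both agents' observations are equally informative, so neither agent is a natural ``leader'').

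First I would compute $\min_{\alpha,\beta,\gamma,\delta} J(\pi^{\alpha\beta\gamma\delta};D)$. By Proposition~\ref{prop:classicalsol} this is just $\widehat J$, which in the symmetric case $\lambda_B=\lambda_H=\lambda$, $\chi(0)=\chi(1)=\chi$, $\Pbb(\xi_W=0)=\Pbb(\xi_W=1)=1/2$ evaluates to $\widehat J = \min(-\chi/2,\ -\lambda\chi,\ -\chi/2) = -\lambda\chi$ for $\lambda\ge 1/2$. So the classical/local optimum is $-\lambda\chi$.

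Next, the main computation: evaluate $J(Q^{\alpha\beta\delta};D)$ for the eight non-local vertices. Using the cost expression
\[
J(Q;D)=-\sum_{u,\xi}\chi(\xi_W)\Pbb(\xi_B,\xi_H,\xi_W)\,\delta(\xi_W, u_B\oplus u_H)\,Q(u_B,u_H|\xi_B,\xi_H),
\]
and the fact that $Q^{\alpha\beta\delta}$ is supported uniformly ($1/2$ each) on the two outcomes with $u_B\oplus u_H = \xi_B.\xi_H \oplus \alpha.\xi_B\oplus\beta.\xi_H\oplus\delta$, we get that for each $(\xi_B,\xi_H)$ exactly one of the two equiprobable outcomes satisfies $u_B\oplus u_H=\xi_W$ when $\xi_W$ equals the fixed bit value $b(\xi_B,\xi_H):=\xi_B.\xi_H\oplus\alpha.\xi_B\oplus\beta.\xi_H\oplus\delta$, contributing weight $1/2$, and none when $\xi_W\ne b(\xi_B,\xi_H)$. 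Hence
\[
J(Q^{\alpha\beta\delta};D) = -\tfrac12\sum_{\xi_B,\xi_H}\chi\big(b(\xi_B,\xi_H)\big)\,\Pbb\big(\xi_B,\xi_H,\,\xi_W=b(\xi_B,\xi_H)\big).
\]
I would tabulate this over the four $(\xi_B,\xi_H)$ pairs using \eqref{eq:priordist}; in the symmetric case $\Pbb(\xi_B,\xi_H,\xi_W)$ takes value $\tfrac12\lambda^2$ when $\xi_B=\xi_H=\xi_W$, $\tfrac12(1-\lambda)^2$ when $\xi_B=\xi_H\ne\xi_W$, and $\tfrac12\lambda(1-\lambda)$ when exactly one of $\xi_B,\xi_H$ differs from $\xi_W$. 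Choosing the vertex with $\alpha=\beta=\delta=0$, so $b(\xi_B,\xi_H)=\xi_B.\xi_H$, one finds $b=1$ only at $(\xi_B,\xi_H)=(1,1)$ and $b=0$ on the other three pairs, giving (with $\chi=1$)
\[
J(Q^{000};D) = -\tfrac12\Big[\Pbb(0,0,0)+\Pbb(1,0,0)+\Pbb(0,1,0)+\Pbb(1,1,1)\Big]
= -\tfrac12\Big[\tfrac12\lambda^2 + \tfrac12\lambda(1-\lambda)+\tfrac12\lambda(1-\lambda)+\tfrac12\lambda^2\Big],
\]
which simplifies to $J(Q^{000};D) = -\tfrac12\lambda$. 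That is not yet better than $-\lambda$, so this particular vertex is the wrong one; I would instead sweep over all eight $(\alpha,\beta,\delta)$ (equivalently, pick the assignment of the fixed bit $b(\cdot,\cdot)$ to the four input pairs that maximizes the collected probability mass against the correct $\chi$-weighted prior), and also over the asymmetric weights $\chi(0)\ne\chi(1)$ if needed. The key point is that a PR-box-type vertex lets the agents' estimate $u_B\oplus u_H$ be a genuinely nonlinear ($\xi_B.\xi_H$) function of the joint observation, which no local deterministic strategy can realize, so for suitable priors the collected mass strictly exceeds what the best ``one active agent'' strategy achieves, i.e. $\min_{\alpha\beta\delta}J(Q^{\alpha\beta\delta};D) < -\lambda\chi = \widehat J$.

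The main obstacle is purely bookkeeping: making sure the right non-local vertex is paired with the right prior so that the inequality is strict, rather than an accident of a bad choice. I would resolve this by treating it as a small finite optimization — for fixed $(\lambda_B,\lambda_H,\chi(0),\chi(1))$ the quantity $\min_{\alpha\beta\delta}J(Q^{\alpha\beta\delta};D)$ is an explicit minimum of eight affine expressions in the prior weights, and $\widehat J$ is an explicit minimum of three — and then pick, e.g., $\lambda_B=\lambda_H$ close to $1$ with $\chi(0)=\chi(1)=1$ (or slightly perturb to break ties) and simply display the two numbers, exhibiting $J^*_{\NS}(D) \le \min_{\alpha\beta\delta}J(Q^{\alpha\beta\delta};D) < \widehat J = J^*_\Lscr(D)$. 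The strict inclusion $\Lscr\subsetneq\NS$ is then immediate since the two polytopes, optimized against the same linear objective, give different optimal values.
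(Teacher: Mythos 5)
Your overall strategy is the same as the paper's: exhibit one explicit numerical instance, compute $\widehat J$ from Proposition~\ref{prop:classicalsol}, evaluate the non-local vertices, and display a strict gap. However, as written the proposal has two genuine problems and never actually produces a working instance. First, a computational error: for the vertex $Q^{\alpha\beta\delta}$, \emph{both} of the two equiprobable outcomes in its support satisfy $u_B\oplus u_H=b(\xi_B,\xi_H)$ --- that is exactly the defining condition of the support in \eqref{eq:nsvertex} --- so the collected probability when $\xi_W=b(\xi_B,\xi_H)$ is $\tfrac12+\tfrac12=1$, not $\tfrac12$. Your formula for $J(Q^{\alpha\beta\delta};D)$ carries a spurious factor of $\tfrac12$; with the correct factor one finds, e.g., $J(Q^{000};D)=-\lambda$ in your symmetric instance, which \emph{ties} (rather than trails) the classical optimum $\widehat J=-\lambda$.

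Second, and more importantly, your primary parameter choice $\chi(0)=\chi(1)$ provably cannot work: the paper shows (in the discussion surrounding Figure~\ref{fig:chiadv}) that for $\chi(0)=\chi(1)$ every non-local vertex satisfies $J(Q^{\alpha\beta\delta};D)\geq J^*_\Lscr(D)$, so $J^*_{\Nscr\Sscr}(D)=J^*_\Lscr(D)$ and there is no no-signalling advantage at all in the symmetric-cost regime. The asymmetry $\chi(0)\neq\chi(1)$ is therefore not an optional refinement to ``sweep over if needed'' but the essential ingredient; likewise your concrete fallback ($\lambda$ close to $1$ with $\chi(0)=\chi(1)=1$, possibly perturbed) fails, since as $\lambda\to 1$ the best non-local vertex only matches $-\lambda$ in the limit. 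The paper's proof picks $\chi(0)=1$, $\chi(1)=3$, $\lambda_B=\lambda_H=4/5$, for which $J^*_\Lscr(D)=-8/5$ while the vertex whose deterministic XOR equals $\xi_B\vee\xi_H$ collects cost $-44/25<-8/5$. To repair your argument you would need to actually exhibit such an asymmetric instance and carry out the (correctly normalized) vertex evaluation; the existence claim does not follow from the symmetric analysis you present.
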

\begin{proof}
We establish the proposition by provision of a direct numerical example. Take $\chi(0)=1, \chi(1)=3$ and $\lambda_B=\lambda_H=:\lambda=4/5$. Then, 
\begin{equation}
	J_\Lscr^*(D)=\min(-\chi(0)\Pbb(0),-\chi(1)\Pbb(1), -\lambda\sum_{\xi_W}\chi(\xi_W)\Pbb(\xi_W) )=-8/5.
\end{equation}
On the other hand, notice that for the non-local vertex $Q^{111}$ defined in \eqref{eq:nsvertex}, we have 
\begin{align}
 J(Q^{111};D)&=-{\chi(0)}{}\sum_{\xi_B, \xi_H} \Pbb(\xi_B, \xi_H, 0) (\xi_B.\xi_H\oplus \xi_B\oplus\xi_H)-{\chi(1)}{}\sum_{\xi_B, \xi_H} \Pbb(\xi_B, \xi_H, 1) (\xi_B.\xi_H\oplus \xi_B\oplus\xi_H)\\
& =-44/25<-8/5.
\end{align}
This establishes the proposed.
\end{proof}

Proposition \ref{prop:nosigadvexists} identifies a gap between the optimal cost of problem $D=(\nicefrac{4}{5}, \nicefrac{4}{5}, 1, 3)$ achievable by strategies implemented through common randomness, and the optimal cost as allowed by the static information structure of the problem. This gap arises because in the former space, optimality is characterised  by a strategy where a single agent, say $i$ produces an in-play prediction for $\xi_W$ conditioned on his local information while the other agent ignores his observations entirely. Subsequently this absence of in-play collaboration among the agents renders $\xi_{-i}$, the independently recorded observation of the other agent, futile. On the contrary, the strategy $Q^{111}$ is fundamentally stochastic, and  correlates the actions of the agents. This correlation contributes to their joint estimate being `right' more often. As an illustration, note that
\[ Q^{111}(u_B=0, u_H=0|\xi_B=1, \xi_H=1) = Q^{111}(u_B=1, u_H=1|\xi_B=1, \xi_H=1)=0 \]
\[Q^{111}(u_B=0, u_H=1|\xi_B=1, \xi_H=1)=Q^{111}(u_B=1, u_H=0|\xi_B=1, \xi_H=1) =\half. \]
On the other hand,
\begin{multline*}
Q^{111}(u_B=0|\xi_B=1, \xi_H=1) = Q^{111}(u_B=1|\xi_B=1, \xi_H=1)\\ =Q^{111}(u_H=0|\xi_B=1, \xi_H=1) = Q^{111}(u_H=1|\xi_B=1, \xi_H=1) = \half
\end{multline*}
whereby 
\[ Q^{111}(u_B=0, u_H=1|\xi_B=1, \xi_H=1) > Q^{111}(u_B=0|\xi_B=1, \xi_H=1)Q^{111}(u_H=1|\xi_B=1, \xi_H=1)\]
\[ Q^{111}(u_B=1, u_H=0|\xi_B=1, \xi_H=1) > Q^{111}(u_B=1|\xi_B=1, \xi_H=1) Q^{111}(u_H=0|\xi_B=1, \xi_H=1). \]
This stronger correlation\footnote{In fact, the correlation as been skewed away from the product of the marginals. One can see that $Q^{111}(00|00) < Q^{111}(0|00)Q^{111}(0|00)$, and likewise $Q^{111}(11|00) < Q^{111}(1|00)Q^{111}(1|00)$.} afforded by $Q^{111}$, even while remaining in the $\NS$ polytope, contributes to a lower estimation error.

Although a strategy in $\NS$ improves the cost of this instance, a physically method for realising arbitrary strategies in $\NS$ is unknown. As the central contribution of this article, we show that quantum mechanics allows for a physically realisable mechanism through which the agents can correlate their actions and improve upon the cost achieved through common randomness. This bridges some but not all of the gap between what is allowed by the static information structure $(J_{\Nscr\Sscr}^*(D))$ and what is achievable via passively random strategies $(J_{\Lscr}^*(D))$.

\section{Quantum Strategies for Decentralised Control}\label{sec:quantumstrats}
In the section that follows, we formulate  a class of stochastic strategies that exploit quantum entanglement to generate randomness. In contrast to the common randomness that has been the subject of \cite{ananthram2007commonrandom} and of our discussion so far, quantum strategies are not passively random, \ie they cannot be expressed in the form \eqref{eq:commonrandom}. Quantum randomness allows for richer correlations to manifest among the actions of the agents without violating the static information structure, and hence the set of quantum strategies $\Qscr$ extends beyond the local polytope $\Lscr$ but remains a subset of $\NS$. We also establish the convexity of the set $\Qscr$. 
\subsection{Formulation of $\Qscr$ -- the set of quantum strategies}\label{sec:quantum_strategy_formulation}
We now formulate a quantum strategy for a static team decision problem as described in Section~\ref{sec:tdecisionprobs}. Recall that $\Nscr$ is the set of agents in a given static decision problem and for each agent $i\in\Nscr$, we denote his observation by $\xi_i\in\Xi_i$ and action $u_i\in\Uscr_i$.  For readers not familiar with the quantum mechanical postulates, or associated notation, we recommend a quick glance onto Appendix \ref{app:qmprelim}.
\begin{itemize}
	\item The quantum strategy requires that the agents share a composite system in a state that is specified by a density matrix $\rho\in\Bscr(\otimes_{i}\Hscr_i)$. Each agent $i\in\Nscr$ has access to a physical subsystem described within a Hilbert space $\Hscr_i$ of dimension $\dim\Hscr_i=:d_i$.
   \item The agents agree upon their collective set of measurement and action strategies. For agent $i$, a measurement strategy $\beta_i: \Xi_i\to 2^{\Bscr(\Hscr_i)}$ specifies a choice of a \textit{POVM}  $\beta_i(\xi_i)=\{P_{a_i}^{i}(\xi_i)\}_{a_i\in A_i}$, where $A_i\subset \Real$ is finite, given the  observation $\xi_i\in\Xi_i$. A measurement strategy effectively specifies the measurement of an observable $\Mbf_i:=\sum_{a_i} a_i P_{a_i}^i
   (\xi_i)\in \Bscr(\Hscr_i)$, which results in an outcome $a_i\in A_i$.
   
  \item  An action strategy $\gamma_i:A_i\to\Uscr_i$ specifies the agent's action, given his measurement outcome. Given the freedom in the choice of the measurement stategy, it is without loss of generality to restrict $A_i= \Uscr_i$ and fix the action strategy to identity: $\gamma_i(a)=a$. To emphasize this simplification, we will henceforth shift our notation for the measurement observable from $\Mbf_i$ to $\Ubf_i$.  
  This simplification then informs that the set $\{P_{u_i}^i(\xi_i)\}_{u_i\in\Uscr_i}$ forms a \textit{POVM}, and $P_{u_i}^{i}(\xi_i)$ are projection operators (allowed to be null) which obey
   \begin{align}
   	 P_{u_i}^i(\xi_i)P_{u_i}^i(\xi_i)&=P_{u_i}^i(\xi_i)\ \forall\ i\in\Nscr, \xi_i\in\Xi_i, u_i\in\Uscr_i\\
   	 \sum_{u_i} P_{u_i}^i(\xi_i)&=\Ibf_i \ \forall \ i\in\Nscr, \xi_i\in\Xi_i
   \end{align}
where $\Ibf_i$ is the indentity operator on $\Hscr_i$.
\end{itemize}
Given a mathematical specification of the strategy $Q=(\{\Hscr_i\}_{i\in\Nscr}, \rho, \{P_{u_i}^i(\xi_i)\}_{i, \xi_i, u_i})$, the gameplay through which the strategy is realised is as follows
\begin{itemize}
\item Before gameplay agents share a composite quamtum system $\rho$ and agree on the measurement strategies $\beta_i, i\in \Nscr$.
\item In gameplay, the observations $\xi_i, i \in \Nscr$ are realised and observed by the agents
\item Observing $\xi_i\in\Xi_i$ each agent $i$ performs his local measurement on $\rho$ with the \textit{POVM} $\beta(\xi_i)=\{P_{u_i}^i(\xi_i)\}_{u_i\in\Uscr_i}\subset \Bscr(\Hscr_i)$. The \textit{POVM} corresponds to a measurement of the $\Uscr_i$-valued \textit{action} \textit{observable} $$\Ubf_i:=\sum_{u_i\in\Uscr_i} u_i P_{u_i}^i(\xi_i)\in\Bscr(\Hscr_i).$$
The precise of order measurements is fixed before gameplay, but since it does not affect the outcome, we do not specify it as part of the protocol.
\item  Each agent $i$ chooses the action $u_i$ resulting from his measurement
\end{itemize}

It follows from the measurement postulate for composite systems that for such a realization of the quantum strategy (See Appendix \ref{app:qmprelim}), the joint probability of the collective outcome $u=(u_i)_i\in\prod_i \Uscr_i$ given an observation tuple $\xi=(\xi_i)_i\in \prod_i\Xi_i$ is given by
\begin{equation}\label{eq:quantumoccumeasure}
	Q(u|\xi)=\Tr\left(\rho\bigotimes_{i\in\Nscr} P_{u_i}^i(\xi_i)\right)
\end{equation}
Notice that once all the agents have settled on a measurement strategy, the joint probability of the outcomes is unaffected by the order in which the measurements are physically performed. 
Hence, while the loss in simultaneity of measurements comes inevitably in a real world scenario, it does not affect the realization of a quantum strategy once the subsystems that constitute the composite state $\rho$ are assumed to be persistent to environmental corruption. This is because the order of measurements has no bearing on the joint probability of the outcomes -- a fact we demonstrate explicitly for two agents in Appendix \ref{app:qmprelim}.2, Postulate 5.

 Nevertheless, this discussion of the order of measurements is crucial. Indeed, since the measurement by each agent collapses the state of the composite system, the order of measurements affects the state upon which each local measurement is performed by the other agents. However, we do not consider this order among the strategic elements for it is inconsequential conditional probability defined in \eqref{eq:quantumoccumeasure}. We conclude this section by a re-emphasis on the mathematical formulation of  a quantum strategy, \ie, a quantum strategy $Q$ is (by overloading of notation) denoted by the tuple $(\{\Hscr_i\}_{i\in\Nscr}, \rho, \{P^i_{u_i}(\xi_i)\}_{i, \xi_i, u_i})$. We let the set of quantum strategies $\Qscr$ denote the set of $Q \in \Pscr(\Uscr|\Xi)$ that satisfy \eqref{eq:quantumoccumeasure} for some choice of Hilbert spaces $\Hscr_i$, a composite state $\rho$, and POVMs $\{P_{u_i}^i(\xi_i)\}_{u_i\in \Uscr_i}, \xi_i \in \Xi_{i},$ for each $i\in \Nscr.$ 

\subsection{Numerical demonstration of the quantum advantage}\label{sec:numdemo}
We now numerically demonstrate the cost advantage offered by the set of quantum strategies. 
We attend the problem $D=(\lambda_B, \lambda_H, \chi(0), \chi(1))$ that has been the subject of our investigation in Section \ref{sec:commonrandlim}. Recall that for $\chi(0)=1, \chi(1)=3$ and $\lambda_B=\lambda_H=\nicefrac{4}{5}$, $J^*_\Lscr(D)=\nicefrac{-8}{5}$ and $J_{\Nscr\Sscr}^*(D)=\nicefrac{-44}{25}$. We are primarily interested in demonstrating that $J^*_\Qscr(D)<J^*_\Lscr(D)$. To this end, we directly specify a quantum strategy $Q\in\Qscr$ for $D$ and show that it obtains a cost strictly lower than $J^*_\Lscr(D)$.

We now define a quantum strategy $Q=(\Hscr_B, \Hscr_H, \rho, \{P^B_{u_B}(\xi_B)\}_{u_B, \xi_B}, \{P^H_{u_H}(\xi_H)\}_{u_H, \xi_H})$ for the instance $D$ as follows.
\begin{enumerate}[i)]
	\item $\dim\Hscr_B=\dim\Hscr_H=2.$ We denote the identity on $\Hscr_B$ and $\Hscr_H$ by $\Ibf$.
	\item The density matrix $\rho\in\Bscr(\Hscr_B\otimes\Hscr_H)$ is given by
	\begin{equation}\label{eq:numstatematrix}
		\rho=\begin{pmatrix}
			{1}/{\sqrt{2}} \\0\\0 \\{1}/{\sqrt{2}}
		\end{pmatrix}\begin{pmatrix}
			{1}/{\sqrt{2}} &0&0&{1}/{\sqrt{2}}
		\end{pmatrix} =\begin{pmatrix}
	{1}/{{2}} &0&0&{1}/{{2}}\\
	0 &0&0&0\\
	0&0&0&0\\
	{1}/{{2}} &0&0&{1}/{{2}}
\end{pmatrix}
	\end{equation}
where we use some orthonormal basis of $\Hscr_B\otimes\Hscr_H$ for our representation. It is evident that $\Tr\rho=1$ and $\rho\succeq 0$ since $\rho=v v^T$ for a $v\in \Hscr_B\otimes\Hscr_H$. 
   \item To specify the measurement strategy, we first consider the following parameterised operator $P(\mu, a, b, \theta)$  in $\Hscr_B$ (or $\Hscr_H$) 
   \begin{equation}
    P(\mu, a, b, \theta)=\frac{1}{\mu}\begin{pmatrix}
    	a & e^{-\iota \theta}\\
    	e^{\iota \theta} & b
    \end{pmatrix} 
   \end{equation}
where $\mu,a,b \in \Real$ such that 
\begin{equation}\label{eq:squareequal}
{\mu=a+b, \mu a = 1+a^2, \mu b=1+b^2 , \text{ and }\theta \in [0,2\pi).}
\end{equation}
We also have a special notation $P(\infty, \infty, 0, 0):=\begin{pmatrix}
	1&0\\ 0&0
\end{pmatrix}=\lim_{\mu\to\infty} P(\mu, \mu, 0,0)$ and similarly $P(\infty, 0, \infty,\pi):=\begin{pmatrix}
0&0\\ 0&1
\end{pmatrix}= \lim_{\mu\to\infty} P(\mu, 0, \mu,\pi)$.
Now we specify the measurement strategy so as complete the specification of our $Q$:
\begin{align}
	P_0^B(0)&=P(\infty, \infty,0,0);\ &P_0^B(1)=P(\frac{4}{\sqrt{3}},\sqrt{3},\frac{1}{\sqrt{3}},\pi/2);\nonumber\\
		P_1^B(0)&=\Ibf-P_0^B(0)=P(\infty, 0, \infty, \pi);\ &P_1^B(1)=\Ibf-P_0^B(1)=P(\frac{4}{\sqrt{3}}, \frac{1}{\sqrt{3}}, \sqrt{3}, 3\pi/2);\nonumber\\ 
	 P_0^H(0)&=P(\frac{4}{\sqrt{3}},{\sqrt{3}},\frac{1}{\sqrt{3}},3\pi/2);\  &P_0^H(1)=P(\frac{4}{\sqrt{3}},\frac{1}{\sqrt{3}},\sqrt{3},3\pi/2);\nonumber\\
	P_1^H(0)&=\Ibf-P_0^H(0)= P(\frac{4}{\sqrt{3}}, \frac{1}{\sqrt{3}}, \sqrt{3}, \pi/2);\  &P_1^H(1)=\Ibf-P_0^H(1)=P(\frac{4}{\sqrt{3}}, \sqrt{3},\frac{1}{\sqrt{3}},  \pi/2).
\end{align}
\end{enumerate}
Notice that 
\[  P(\mu, a, b, \theta)^2=\frac{1}{\mu^2}\begin{pmatrix}
	(1+a^2) & e^{-\iota \theta}(a+b)\\
	e^{\iota \theta}(a+b) & (1+b^2)
\end{pmatrix}, \] 
so that due to \eqref{eq:squareequal} 
$P(\mu, a, b, \theta)$ is a projector. Also notice that $\Ibf-P(\mu, a, b, \theta)= P(\mu, \mu-a, \mu-b, \theta\pm\pi)$ is also a projector since $(\Ibf-P(\mu, a, b, \theta))^2=\Ibf-2P(\mu, a, b, \theta)+P(\mu, a, b, \theta)=\Ibf-P(\mu, a, b, \theta)$.

Towards evaluation of the associated occupation measures $Q(u_B, u_H|\xi_B, \xi_H)$, first consider for $P(\mu, a, b, \theta)\in\Hscr_B$ and $P(\mu', a', b', \theta')\in\Hscr_H$,
\begin{equation}
	P(\mu, a, b, \theta)\otimes P(\mu', a', b', \theta')=\frac{1}{\mu\mu'}\begin{pmatrix}
aa'& a e^{-\iota\theta'}& a'e^{-\iota\theta}& e^{-\iota(\theta+\theta')}\\
ae^{\iota\theta}& ab'& e^{\iota(\theta'-\theta)}& b'e^{-\iota\theta}\\
a'e^{\iota\theta}& e^{\iota(\theta-\theta')}& a'b& b e^{-\iota\theta}\\
e^{\iota(\theta+\theta')}& b'e^{\iota\theta}& b e^{\iota \theta'} & bb'
	\end{pmatrix}
\end{equation}
so that, for finite $\mu,a,b$
\begin{equation}
\Tr\left(\rho P\left(\mu, a, b, \theta\right)\otimes P\left(\mu', a', b', \theta'\right)\right)=\frac{1}{\mu\mu'}\left(\frac{aa'+bb'}{2}+ \cos(\theta+\theta')\right).
\end{equation}


This allows us to tabulate all occupation measures corresponding to the specified quantum strategy $Q$. For instance,
\begin{multline}
	Q(0,0|0,0)=\Tr(\rho P(\infty, \infty, 0, 0)\otimes P(\frac{4}{\sqrt{3}}, \sqrt{3}, \frac{1}{\sqrt{3}}, \frac{3\pi}{2}))\\
	=\lim_{\mu\to\infty} \Tr (\rho P(\mu,\mu,0,0)\otimes P(\frac{4}{\sqrt{3}}, \sqrt{3}, \frac{1}{\sqrt{3}}, \frac{3\pi}{2}) ) = \lim_{\mu\to\infty} \frac{\sqrt{3}}{4\mu}\left(\frac{\mu \sqrt{3}}{2}-1\right)=\frac{3}{8}
\end{multline}
	\begin{table}[h]
		\centering
\begin{tabular}{ |c|c| } 
	\hline
{	$\mathbf{Q(u_B, u_H|\xi_B, \xi_H)}$ }& \textbf{$\mathbf{Q(u_B, u_H|\xi_B, \xi_H)}$} \\
	\hline
$Q(0, 0| 0,0)=3/8$ & $Q(0,0|1,0)=1/8$\\
\hline
$Q(0, 1| 0,0)=1/8$ & $Q(0,1|1,0)=3/8$\\
\hline
$Q(1, 0| 0,0)=1/8$ & $Q(1,0|1,0)=3/8$\\
\hline
$Q(1, 1| 0,0)=3/8$ & $Q(1,1|1,0)=1/8$\\
\hline
$Q(0, 0| 0,1)=1/8$ & $Q(0,0|1,1)=0$\\
\hline
$Q(0, 1| 0,1)=3/8$ & $Q(0,1|1,1)=1/2$\\
\hline
$Q(1, 0| 0,1)=3/8$ & $Q(1,0|1,1)=1/2$\\
\hline
$Q(1, 1| 0,1)=1/8$ & $Q(1,1|1,1)=0$\\
\hline
\end{tabular}
\caption{Occupation Measures of the specified quantum strategy $Q$}
\label{tab:quantoccmea}
	\end{table}

\begin{proposition}\label{prop:quantumadvantage}
	For the decision problem $D=(\lambda_B, \lambda_H, \chi(0), \chi(1))$, let $\lambda_B=\lambda_H=\nicefrac{4}{5}$, $\chi(0)=1$, and $\chi(1)=3$. Then the following strict inequality holds,
	\begin{equation}
		J_{\Qscr}^*(D)<J^*_\Lscr(D).
	\end{equation}
\end{proposition}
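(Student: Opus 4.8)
The plan is to use the explicit quantum strategy $Q$ just constructed---whose occupation measures are collected in Table~\ref{tab:quantoccmea}---and to verify that its expected cost alone already undercuts $J^*_\Lscr(D)$. Since $Q\in\Qscr$ by construction (each $\{P_0^i(\xi_i),P_1^i(\xi_i)\}$ is a projective POVM and $\rho\succeq 0$ with $\Tr\rho=1$), we have $J^*_\Qscr(D)\le J(Q;D)$, so it suffices to establish the single numerical inequality $J(Q;D)<J^*_\Lscr(D)$. First I would pin down the right-hand side: by Proposition~\ref{prop:classicalsol} with $\lambda_B=\lambda_H=\nicefrac{4}{5}$, $\chi(0)=1$, $\chi(1)=3$ and $\Pbb(0)=\Pbb(1)=\nicefrac{1}{2}$, the three candidate values are $-\chi(0)\Pbb(0)=-\nicefrac{1}{2}$, $-\chi(1)\Pbb(1)=-\nicefrac{3}{2}$ and $-\lambda_H\sum_{\xi_W}\Pbb(\xi_W)\chi(\xi_W)=-\nicefrac{8}{5}$, so $J^*_\Lscr(D)=\widehat{J}=-\nicefrac{8}{5}$.

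Next I would carry out the cost computation from Table~\ref{tab:quantoccmea}. Introducing $m_w(\xi_B,\xi_H):=\sum_{u_B\oplus u_H=w}Q(u_B,u_H|\xi_B,\xi_H)$, the probability that the joint estimate equals $w$ given the observations, the table gives $m_0(0,0)=\nicefrac{3}{4}$, $m_0(0,1)=m_0(1,0)=\nicefrac{1}{4}$, $m_0(1,1)=0$ and $m_1(0,0)=\nicefrac{1}{4}$, $m_1(0,1)=m_1(1,0)=\nicefrac{3}{4}$, $m_1(1,1)=1$. From \eqref{eq:priordist} with $\lambda_B=\lambda_H=\nicefrac{4}{5}$, the prior weight $\Pbb(\xi_B,\xi_H,\xi_W)$ is $\nicefrac{8}{25}$ when $\xi_B=\xi_H=\xi_W$, $\nicefrac{2}{25}$ when exactly one of $\xi_B,\xi_H$ disagrees with $\xi_W$, and $\nicefrac{1}{50}$ when both disagree. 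Substituting these into
\[
J(Q;D)=-\sum_{\xi_W}\chi(\xi_W)\sum_{\xi_B,\xi_H}\Pbb(\xi_B,\xi_H,\xi_W)\,m_{\xi_W}(\xi_B,\xi_H),
\]
the $\xi_W=0$ inner sum equals $\frac{8}{25}\cdot\frac{3}{4}+\frac{2}{25}\cdot\frac{1}{4}+\frac{2}{25}\cdot\frac{1}{4}+\frac{1}{50}\cdot 0=\frac{7}{25}$ and the $\xi_W=1$ inner sum equals $\frac{8}{25}\cdot 1+\frac{2}{25}\cdot\frac{3}{4}+\frac{2}{25}\cdot\frac{3}{4}+\frac{1}{50}\cdot\frac{1}{4}=\frac{89}{200}$, so $J(Q;D)=-\bigl(1\cdot\tfrac{7}{25}+3\cdot\tfrac{89}{200}\bigr)=-\tfrac{323}{200}$.

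Since $-\tfrac{323}{200}<-\tfrac{320}{200}=-\nicefrac{8}{5}=J^*_\Lscr(D)$, this gives $J^*_\Qscr(D)\le J(Q;D)=-\tfrac{323}{200}<J^*_\Lscr(D)$, which is exactly the claimed strict inequality. I do not expect any conceptual obstacle; the one place demanding care is bookkeeping---checking that the eight entries of Table~\ref{tab:quantoccmea} really do come out of the trace formula \eqref{eq:quantumoccumeasure} with the stated $\rho$ and projectors (the representative case $Q(0,0|0,0)=\nicefrac{3}{8}$ is already worked out in the text, and the remaining seven are entirely analogous), and then keeping each prior weight matched to the correct $(\xi_B,\xi_H)$ pair for each value of $\xi_W$.
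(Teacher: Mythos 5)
Your proposal is correct and follows essentially the same route as the paper: both evaluate the expected cost of the explicit quantum strategy from Table~\ref{tab:quantoccmea} against the prior \eqref{eq:priordist}, arrive at $J(Q;D)=-\tfrac{323}{200}$, and compare with $J^*_\Lscr(D)=-\tfrac{8}{5}$ obtained from Proposition~\ref{prop:classicalsol}. The only difference is organizational (you group terms by the value of the joint estimate $u_B\oplus u_H$, while the paper rearranges into $-\Pbb(\xi_W=0)$ plus a correction), and your arithmetic checks out.
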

\begin{proof}
	We show that the strategy $Q\in \Qscr$ formulated above in Section \ref{sec:numdemo} beats all deterministic strategies strictly \ie $J(Q;D)< J^*_\Lscr(D)$.  Notice from Table \ref{tab:quantoccmea} that
	\begin{align*}
		J(Q;D)&= \sum_{\xi, u, \xi_W}\Pbb(\xi, \xi_W) \ell(u, \xi_W)Q(u|\xi)\\
		&=- \sum_{\xi, u, \xi_W}\Pbb(\xi, \xi_W) \chi(\xi_W) \delta(\xi_W, u_B\oplus u_H)Q(u|\xi)\\
		&=-\sum_{\xi}\Pbb(\xi, 0) (Q(0,0|\xi)+ Q(1,1|\xi)) + 3\Pbb(\xi, 1) (Q(1,0|\xi)+ Q(0,1|\xi))\\
		&= -\Pbb(\xi_W=0)-\sum_{\xi} (3\Pbb(\xi, 1)-\Pbb(\xi, 0))(Q(1, 0|\xi)+ Q(0, 1|\xi))\\
			&=-\frac{1}{2}-\frac{1}{8}\left(3\left(1-\left(\frac{4}{5}\right)\right)^2-\left(\frac{4}{5}\right)^2\right)-\frac{6}{8}\left(2.\frac{4}{5}.\left(1-\frac{4}{5}\right)\right)-\frac{1}{2}\left(3\left(\frac{4}{5}\right)^2-\left(1-\frac{4}{5}\right)^2\right)\\
			&= -\frac{1}{2}+\frac{13}{200}-\frac{48}{200}-\frac{188}{200}=-\frac{323}{200}< -\frac{320}{200}=-\frac{8}{5}=J_\Lscr^*(D)
	\end{align*}
The proposed now simply follows since $Q\in\Qscr$.
\end{proof}
The presented strategy can in fact be  implemented using a pair of entangled electrons. The state $\rho$ is a well known Bell basis state of such a pair \cite{neilsen2004qcqi} and local spin measurements on the electrons along suitably chosen axes in the three dimensional physical space implement the required measurement strategy. The measuement of outcome of the spin measurement, $s=\pm 1$, can be mapped to an action $u_i=(-1)^{(s+1)/2}$ to then specify an action strategy, to ultimately realize the quantum strategy that produces an advantage in Proposition \ref{prop:quantumadvantage}.

\subsection{Information, cost and the quantum advantage}
For $D=(\lambda_B, \lambda_H, \chi(0), \chi(1))$, recall from \eqref{eq:priordist} that the parmeter $\lambda_i$ captures the reliability or the quality of agent $i$'s information. On the other hand, the ratio $\chi(0)/\chi(1)$ determines the relative importance of guessing a particular value of $\xi_W\in\{0, 1\}$ correctly. In this section, we vary these parameters and investigate how the advantage offered by quanutm strategies behaves with this variation.  We discuss the insight that our investigation reveals about the nature of quantum strategies.

The classical optimum $J_\Lscr^*(D)$ is relatively simple to obtain using Proposition \ref{prop:classicalsol}. Computing the quantum infimum requires a search over a non-compact space $\Qscr$ for a general decision problem (see Section \ref{sec:openconvex}). However, owing to Tsirelson's rather detailed development \cite{tsirelson_1, tsirelson_2}, this search is substantially simple for a binary problem (two agents, binary actions and observations for each agent). In this case, $\Qscr$ is convex and compact, whence $J_\Qscr^*(D)$ is attained at an extreme point \cite{tsirelson_1, tsirelson_2, tsirelson_3}. In addition, it suffices to search over the compact space of projectors $\{P_{u_i}^i(\xi_i)\}_{i, u_i, \xi_i}$ while the entangled state $\rho$ is fixed at \eqref{eq:numstatematrix} to sift across all extreme points of $\Qscr$. This is how we generate our numerical results for plotting $J^*_\Qscr(D)$ across our variations. While we skip formal details of this discussion in this article, our subsequent work will carry the relevant theoretical proofs pertaining to the sufficiency of this search.

\subsubsection{Agent information and the quantum advantage}
We vary the reliabilities of agent's observations and investigate the correpsonding variation in the quantum advantage while $\chi(0)=1, \chi(1)=3$ are fixed. We present two plots in Figure \ref{fig:varyinfo}. In the first plot, we maintain $\lambda_B=\lambda_H=\lambda$ and vary $\lambda$ from $0.5$ to $1$. We observe a cutoff $\lambda_c>0.5$ which has to be exceeded by $\lambda$ for the quantum advantage to start appearing. However, the quantum advantage is maintained right upto the trivial upper cutoff $\lambda=1$ for this variation.

In the second plot, we set $\lambda_B=0.65$ and vary $\lambda_H$ from $0.65$ right upto $1$. We present here, both the no-signalling and the quantum advantage. First, we take note of the fact that the quantum advantage appears within an interval $(\underline\lambda_H, \overline\lambda_H)$ while the no-signalling advantage persists through a larger interval $[0.65,\widetilde\lambda_H)$ where $0.65<\underline\lambda_H<\overline\lambda_H<\widetilde\lambda_H$. Instances characterised by $\lambda_H\in (\overline\lambda_H, \widetilde\lambda_H)$ hence correspond to objectives that obey $J_\Lscr^*(D)=J_\Qscr^*(D)>J_{\Nscr\Sscr}^*(D)$. These have been  depicted in the second part of Figure \ref{fig:sliceplots}.

The existence of a non-trivial $\underline\lambda_H$ hints that a quantum advantage appears only once the `collective' information of the agents surpasses some threshold. On the other hand, to intuitively understand an existence of $\overline\lambda_H$ and $\widetilde\lambda_H$, we make a note that there is some randomness inherently within the no-signalling strategies outside $\Lscr$. It happens that once the agent $H$ knows `enough', he is better off estimating $\xi_W$ alone, as against randomly collaborating with $B$.
\begin{figure}[h]
	\centering
	\includegraphics[scale=0.23]{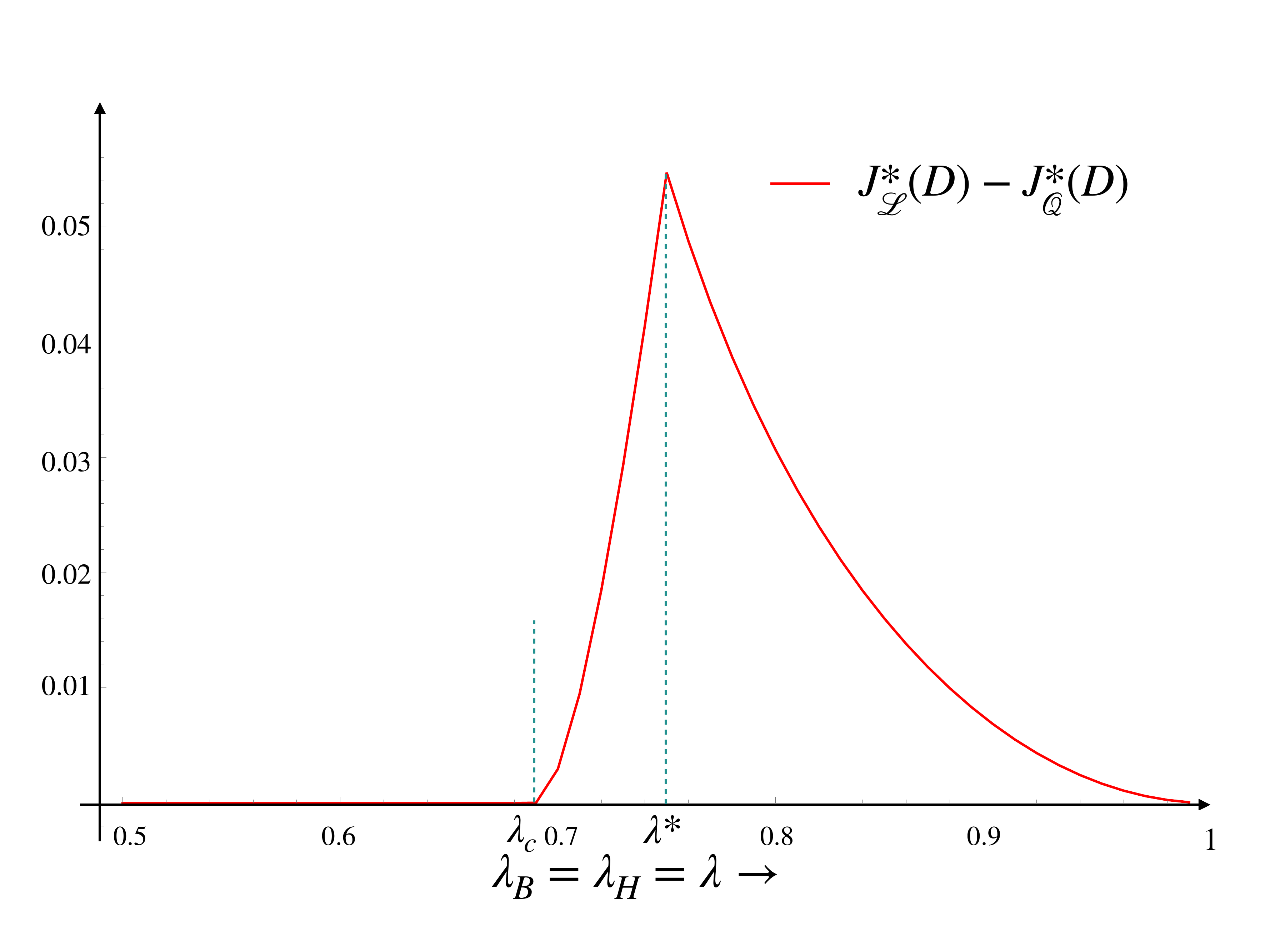}
	\includegraphics[scale=0.25]{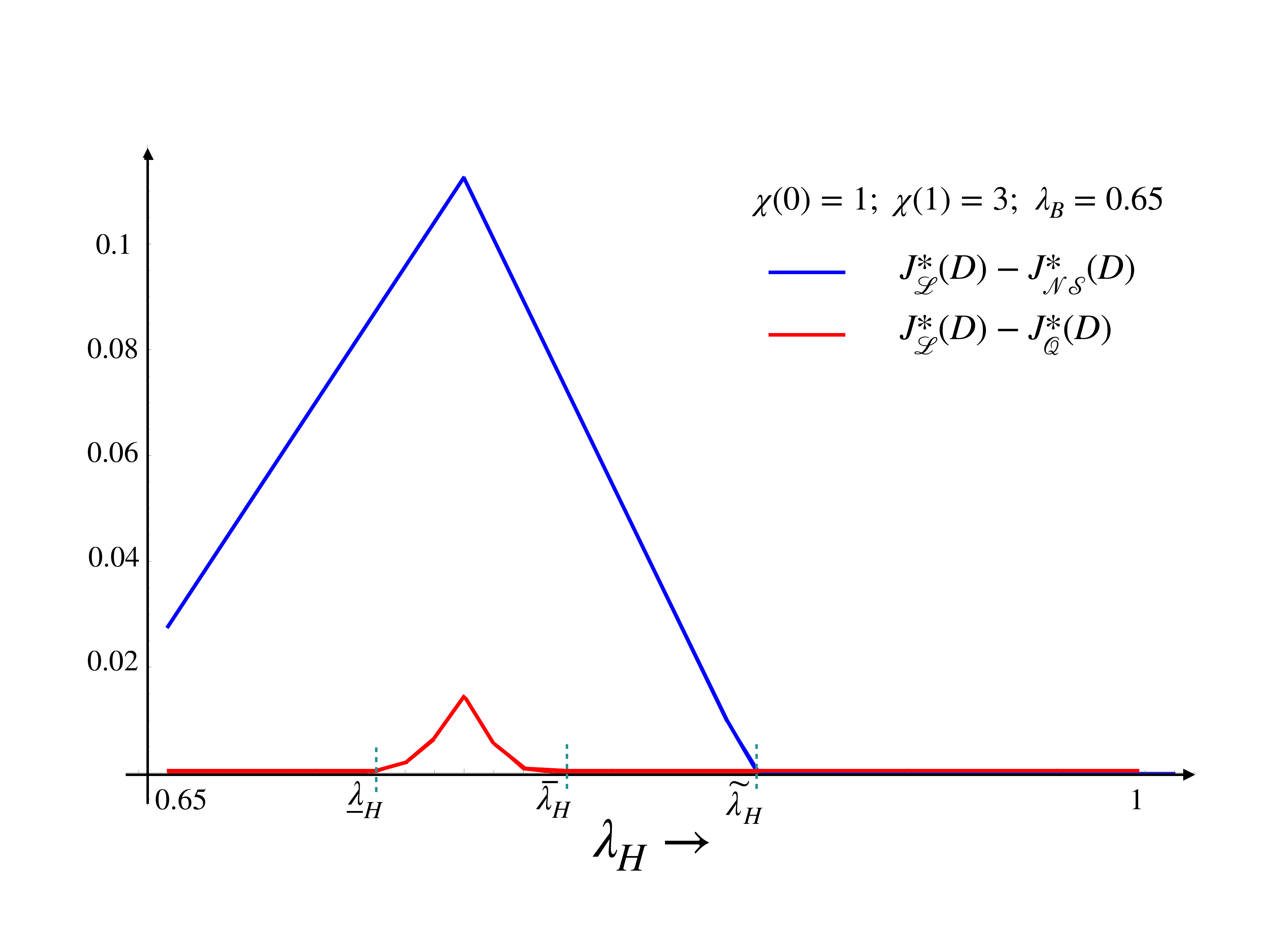}
	\caption{Variation of the Quantum Advantage with reliability of agent's observations.}
	\label{fig:varyinfo}
\end{figure}
A theoretical characterization of these cut-offs is of some interest but requires substantial development, which is the part of our ongoing work.

\subsubsection{Estimation dilemma and the inherent non-local randomness}
To motivate our investigation here, notice that if $\chi(1)=0$, then the algebraic optimum of the cost is simply obtained by $u_B\equiv u_H=0$ which is also the deterministic optimum. This is essentially because there is no reward for estimating $\xi_W=1$ correctly. Similarly in the limit $\chi(1)/\chi(0)\to \infty$, it is optimal for the decision makers to play $u_B\equiv 0,  u_H\equiv 1$ since it is infinitely more rewarding to guess $\xi_W=1$ correctly. We can then ask if there exists an intermediate interval for $\chi(1)/\chi(0)$ that contains all of the quantum advantage in our instances. The plot presented in Figure \ref{fig:chiadv} settles this query affirmatively.
\begin{figure}[h]
	\centering
	\includegraphics[scale=0.3]{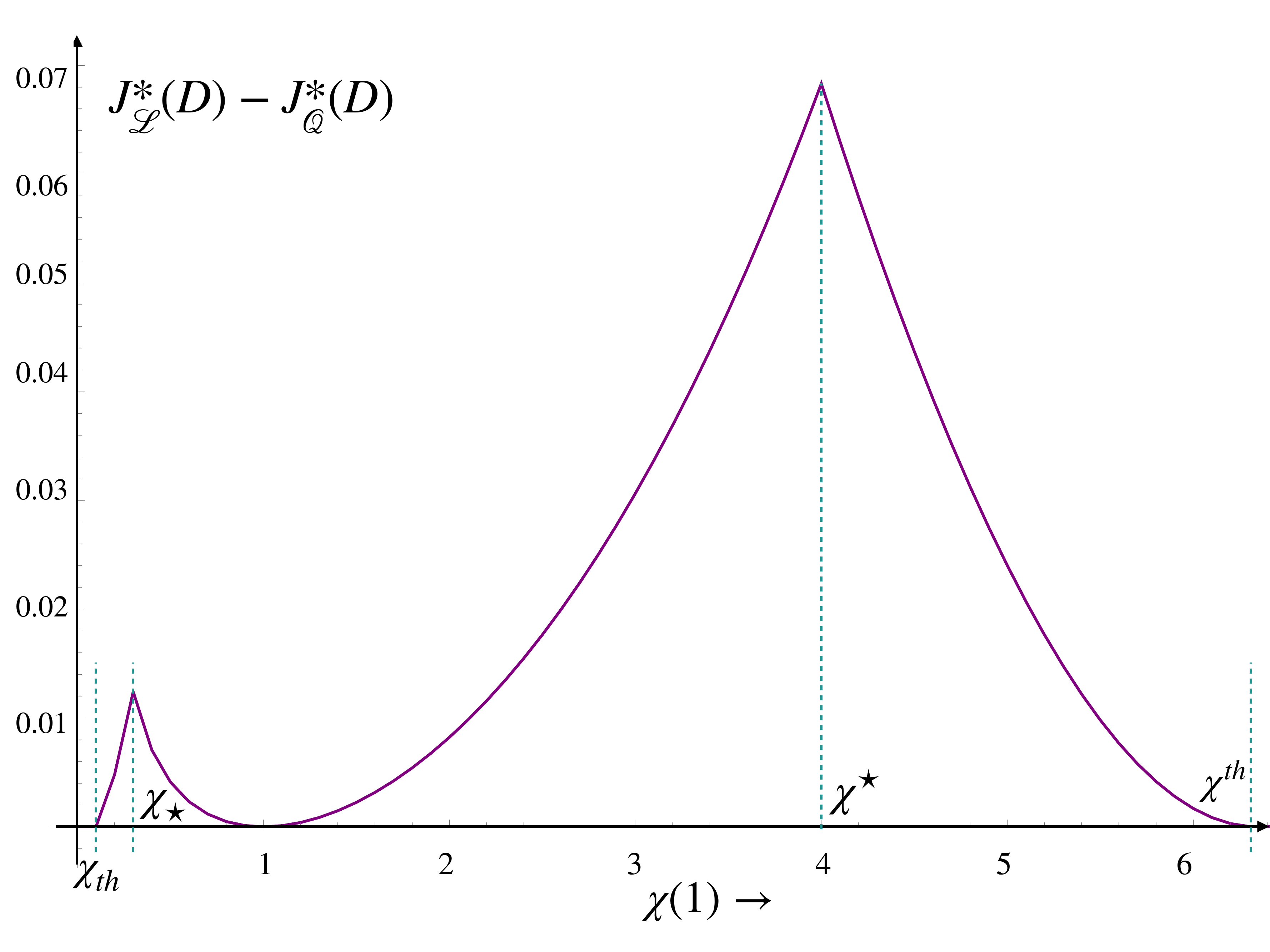}
	\caption{Variation of the Quantum Advantage with $\chi(1)$ $(\chi(0)=1, \lambda_B=\lambda_H=0.8)$}
	\label{fig:chiadv}
\end{figure}

We find that there are non-trivial lower and upper cut-offs, namely $\chi_{th}\leq 1$ and $\chi^{th}\geq 1$ respectively, on the ratio $\chi(1)/\chi(0)$ so that the quantum advantage is contained within their interval $(\chi_{th}, \chi^{th})$ as demonstrated in Figure \ref{fig:chiadv}.  Moreover, $\chi(0)=\chi(1)$ does not admit a quantum advantage either. The former observation requires substantial development for an analytical demonstration, and is the subject of our upcoming work \cite{deshpande2023binary2}. One can however intuitively explain this as an attribute of the inherent randomness in quantum strategies, \ie, once $\chi(1)/\chi(0)$ goes above $\chi^{th}$, the risk of incorrectly estimating $\xi_W=1$ brought by the randomness is no longer worth the improvement brought in by stronger correlations.  It is also worth noting that the advantage peaks at $\chi_\star\in (\chi_{th}, 1)$ and $\chi^{\star}\in (1, \chi^{th})$. It can be quickly seen why $\chi=1$ fails to admit a quantum advantage hence bifurcating our `region of interest' from the short calculation that follows. \\
Take any non-local vertex $Q^{\alpha\beta\delta}$ of $\Nscr\Sscr$ and notice (Recall \eqref{eq:nsvertex})
\begin{align*}
&J(Q^{\alpha\beta\delta};D)=-\sum_{\xi_B, \xi_H} \Pbb(\xi_B, \xi_H, 0) (\sim\xi_B.\xi_H\oplus \alpha \xi_B\oplus \beta \xi_H\oplus\delta)-\sum_{\xi_B, \xi_H} \Pbb(\xi_B, \xi_H, 1) (\sim\xi_B.\xi_H\oplus \alpha \xi_B\oplus \beta \xi_H\oplus\delta)\\
&=-(1/2)(\lambda^2 +(1-\lambda)^2)(\sim\delta+ \alpha\oplus\beta\oplus\delta)-\lambda(1-\lambda)(\sim\alpha\oplus\delta+\sim\beta\oplus\delta)\geq -\lambda=J_\Lscr^*(D)
\end{align*}
where the last inequality can be reasoned as follows. If $\delta=1$, then $\alpha=\beta=1$ minimizes the rest by ensuring $\alpha\oplus\beta\oplus\delta=\sim\alpha\oplus\delta=\sim\beta\oplus\delta=1$. This yields $J(Q^{110};D)=-1/2-\lambda(1-\lambda)\geq -\lambda$ for all $\lambda\in (1/2, 1]$. Otherwise if $\delta=0$, then $\alpha=1, \beta=0$ minimizes the sum and yeilds $J(Q^{100};D)=-1/2-(1/2)(-\lambda^2-(1-\lambda)^2)\geq -\lambda$ for all $\lambda\in (1/2, 1]$. Hence for $\chi(0)=\chi(1)=1$, $J_{\Nscr\Sscr}^*(D)=J_{\Qscr}^*(D)=J_{\Lscr}^*(D)$ holds.

\subsection{General properties of the quantum strategic space $\Qscr$}\label{generalstatic}
We now present some general properties of the Quantum strategic space $\Qscr$. In particular, it is relatively simple to establish the convexity of $\Qscr$ following \cite{werner2001multipartitebell}. It then follows that all strategies implementable using common randomness lie within $\Qscr$. We also show that quantum strategies respect the stasis of the information strucutre by establishing the inclusion $\Qscr\in\Nscr\Sscr$. This inclusion physically corresponds to the wave function collapse respecting the absence of super-luminal signalling as dictated by the theory of special relativity. Ultimately, the space of quantum strategies remains an open bounded `balloon' sandwiched within the local polytope $\Lscr$ and the no-signalling polytope $\Nscr\Sscr$. 
\subsubsection{Openness and convexity}\label{sec:openconvex}
The openness of $\Qscr$ was an open problem for a long time, and was recently shown in \cite{slofstra2019notclosed}. The convexity is rather well-known, and we establish it in the following proposition following \cite{werner2001multipartitebell}.
\begin{proposition}\label{prop:convexity}
	$\mathcal{Q}$ is convex. 
\end{proposition}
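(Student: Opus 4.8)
The plan is to follow the direct-sum construction of Werner and Wolf \cite{werner2001multipartitebell} and exhibit an explicit realization of a convex combination of quantum strategies as a quantum strategy. Suppose $Q^{(0)}, Q^{(1)}\in\Qscr$ are realized respectively by tuples $(\{\Hscr_i^{(0)}\}_{i\in\Nscr}, \rho^{(0)}, \{P_{u_i}^{i,(0)}(\xi_i)\})$ and $(\{\Hscr_i^{(1)}\}_{i\in\Nscr}, \rho^{(1)}, \{P_{u_i}^{i,(1)}(\xi_i)\})$, and fix $t\in[0,1]$. For each agent $i\in\Nscr$ I would set $\Hscr_i := \Hscr_i^{(0)}\oplus\Hscr_i^{(1)}$; on the composite space $\bigotimes_{i\in\Nscr}\Hscr_i$ place the block state $\rho := t\,\rho^{(0)}\oplus(1-t)\,\rho^{(1)}$, supported on the two subspaces $\bigotimes_i\Hscr_i^{(0)}$ and $\bigotimes_i\Hscr_i^{(1)}$ of $\bigotimes_i\Hscr_i$ and zero on every other block of the tensor-product-of-direct-sums decomposition; and take the measurements $P_{u_i}^{i}(\xi_i) := P_{u_i}^{i,(0)}(\xi_i)\oplus P_{u_i}^{i,(1)}(\xi_i)$ on $\Hscr_i$.

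I would then check the three requirements. First, $\rho$ is a legitimate density matrix: it is positive semidefinite as a nonnegative combination of positive semidefinite blocks, and $\Tr\rho = t\Tr\rho^{(0)} + (1-t)\Tr\rho^{(1)} = 1$. Second, each $\{P_{u_i}^i(\xi_i)\}_{u_i\in\Uscr_i}$ is a valid projective measurement on $\Hscr_i$ — a direct sum of projectors is a projector, and $\sum_{u_i}P_{u_i}^i(\xi_i) = \bigl(\sum_{u_i}P_{u_i}^{i,(0)}(\xi_i)\bigr)\oplus\bigl(\sum_{u_i}P_{u_i}^{i,(1)}(\xi_i)\bigr) = \Ibf_i^{(0)}\oplus\Ibf_i^{(1)} = \Ibf_i$. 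Third, and this is the heart of the computation, I would evaluate \eqref{eq:quantumoccumeasure} for this realization. Because a tensor product of block-diagonal operators is block-diagonal in the product decomposition, $\bigotimes_i P_{u_i}^i(\xi_i)$ restricted to the block $\bigotimes_i\Hscr_i^{(0)}$ coincides with $\bigotimes_i P_{u_i}^{i,(0)}(\xi_i)$, and likewise on the $(1)$-block; since $\rho$ has no support outside these two blocks, all cross terms vanish and
$$\Tr\!\left(\rho\bigotimes_{i\in\Nscr}P_{u_i}^i(\xi_i)\right) = t\,\Tr\!\left(\rho^{(0)}\bigotimes_i P_{u_i}^{i,(0)}(\xi_i)\right) + (1-t)\,\Tr\!\left(\rho^{(1)}\bigotimes_i P_{u_i}^{i,(1)}(\xi_i)\right),$$
whose right-hand side equals $t\,Q^{(0)}(u|\xi) + (1-t)\,Q^{(1)}(u|\xi)$ by \eqref{eq:quantumoccumeasure} applied to each original realization. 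Hence $t Q^{(0)} + (1-t) Q^{(1)}\in\Qscr$, which is the claim.

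The only genuinely delicate point is the bookkeeping of the $2^{|\Nscr|}$ blocks of $\bigotimes_i(\Hscr_i^{(0)}\oplus\Hscr_i^{(1)})$ and the verification that the mixed blocks (those combining superscripts $0$ and $1$ across different agents) never contribute; this is routine once one observes that $\rho$ lives only on the two "pure" blocks. I expect this to be the main obstacle only in the sense of notational care, not of mathematical content. An alternative, essentially equivalent route that is even quicker to state is to adjoin to the agents a shared classical register (a fixed-basis "flag") that selects which of the two strategies they run; however, the direct-sum argument above keeps the construction entirely within the formalism of Section~\ref{sec:quantum_strategy_formulation} and is the cleanest to write rigorously.
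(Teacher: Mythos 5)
Your proposal is correct and follows essentially the same route as the paper's own proof: the direct-sum construction of \cite{werner2001multipartitebell}, with $\Hscr_i=\bigoplus_\alpha\Hscr_i^\alpha$, a block-diagonal state supported on the diagonal subspace, direct-sum projectors, and the observation that the mixed (off-diagonal) blocks contribute nothing to the trace. The only cosmetic difference is that you treat a two-term combination while the paper handles $r$ strategies at once; both suffice for convexity.
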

\begin{proof}
	Let 	$ {Q}_{\alpha}=(\{\Hscr^\alpha_i\}_i, \rho_\alpha, \{P_{u_i}^{i, \alpha}(\xi_i)\}_{i, u_i, \xi_i})\in \mathcal{Q}$ for each  $ \alpha\in \{1,\hdots,r\}$.
	Consider a strategy $Q$ which is a convex combination of $\{Q_\alpha\}$, \ie,
	\begin{equation}\label{eq:convexcomb}
		Q(u|\xi)=\sum_{\alpha}\theta_\alpha Q_{\alpha}(u|\xi)\ \forall\ u, \xi.
	\end{equation}
	We show that ${Q}\in \mathcal{Q}$. Let $Q=(\{\Hscr_i\}_i. \rho, \{P_{u_i}^i(\xi_i)\}_{i, u_i, \xi_i})$ where
	\begin{enumerate}[i)]
		\item 	$\mathcal{H}_{i}=\bigoplus_{\alpha=1}^r \mathcal{H}_{i}^{\alpha}$ whence 
		\begin{equation}
			\mathcal{H}:=\bigotimes_{k=1}^n\bigoplus_{\alpha=1}^r  \mathcal{H}^{\alpha}_{i}\cong\bigoplus_{\alpha_1=1}^r...\bigoplus_{\alpha_n=1}^r\bigotimes_i \mathcal{H}_{i}^{\alpha_i}.
		\end{equation} 
		\item $\rho=\bigoplus_{\alpha=1}^r\theta_\alpha\rho^\alpha$. Notice that $\rho$ satisfies 
		$\rho^\dagger=\rho$ by definition and
		$\Tr(\rho)=\sum_\alpha \theta_\alpha \Tr(\rho^\alpha)=\sum_\alpha \theta_\alpha=1$. Further since $\rho^\alpha\succeq 0$ and $\theta_\alpha>0$,
		$\rho\succeq 0$. Thus $\rho$ is a valid density operator on the following diagonal subspace of $\Hscr$:
		\begin{equation}
			\Hscr_d:=\bigoplus_{\alpha=1}^r\bigotimes_i\mathcal{H}^{\alpha}_{i}\subset \mathcal{H}.\label{diag}
		\end{equation}
		\item ${P}_{u_i}^{i}(\xi_i)=\bigoplus_{\alpha=1}^{r}{P}_{u_i}^{i, \alpha}(\xi_i)\in\Bscr(\Hscr_d)\subset\Bscr(\Hscr)$.
	\end{enumerate}
	We are now in a position to evaluate the joint conditional probability distribution over the action space given each agents' information for the strategy $Q$ specified by the Hilbert spaces $\Hscr^{(k)}\ \forall\ k$ and the projectors  $\boldsymbol{P}^{i}(u_i)(\xi_i) \forall\ i, \xi_{i}, u_i$. 
	\begin{equation}
		\begin{split}
			{Q}(u|\xi)&=\Tr\left(\bigoplus_{\alpha_1=1}^r P^{1, \alpha_1}_{u_1}(\xi_1)\otimes ...\otimes \bigoplus_{\alpha_n=1}^r P^{n, \alpha_n}_{u_n}(\xi_n) \rho  \right)
		\end{split}
	\end{equation}
	Substituting $\rho=\bigoplus_\alpha \theta_\alpha \rho^\alpha$,
	\begin{equation}\label{convcalc1}
		\begin{split}
			{Q}(u|\xi)&=\Tr\left(\bigoplus_{\alpha_1=1}^r...\bigoplus_{\alpha_n=1}^rP^{1,\alpha_1}_{u_{1}}(\xi_{1})\otimes ...\otimes P^{n,\alpha_n}_{u_{n}}(\xi_{n})\bigoplus_{\alpha=1}^r\theta_{\alpha}\rho^{\alpha} \right)
		\end{split}
	\end{equation}
	Notice that $\rho$ sits on the diagonal subspace \eqref{diag} and thus projections along off diagonal terms in the concatenation over $\alpha_1,...\alpha_n$ are null, i.e., 
	$$\Tr\left(P^{1,\alpha_1}_{u_{1}}(\xi_{1})\otimes ...\otimes P^{n,\alpha_n}_{u_{n}}(\xi_{n})\rho^{\alpha} \right)=0$$
	unless $\alpha_1=\alpha_2=...=\alpha_n=\alpha$. Thus, \eqref{convcalc1} simplifies to
	\begin{equation}\label{convcalc2}
		\begin{split}
			{Q}(u|\xi) =\sum_{\alpha}\theta_{\alpha}\Tr\left(P^{1,\alpha}_{u_{1}}(\xi_{1})\otimes ...\otimes P^{n,\alpha}_{u_{n}}(\xi_{n})\rho_{\alpha} \right)=\sum_{\alpha=1}^r\theta_{\alpha}Q_{\alpha}(u|\xi).
		\end{split}
	\end{equation}
	which in turn is consisitent with \eqref{eq:convexcomb} and we have thus established the convexity of $\mathcal{Q}$.
\end{proof}
\subsubsection{Inclusion of $\Lscr$}
\begin{proposition}
	$\Lscr\subset\Qscr$ and hence $J_{\Lscr}^*(D)\geq J_\Qscr^*(D)$.
\end{proposition}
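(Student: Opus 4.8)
The plan is to reduce the inclusion to a statement about behaviourally random strategies and then invoke convexity. Since $\Qscr$ is convex by Proposition~\ref{prop:convexity} and $\Lscr=\conv(\Dscr)$ (the relation recorded just below \eqref{eq:commonrandom}), it suffices to show $\Dscr\subseteq\Qscr$; taking convex hulls then yields $\Lscr=\conv(\Dscr)\subseteq\Qscr$. So I would fix an arbitrary $Q\in\Dscr$, written $Q(u|\xi)=\prod_{i\in\Nscr}Q_i(u_i|\xi_i)$, and exhibit a quantum strategy $(\{\Hscr_i\}_i,\rho,\{P^i_{u_i}(\xi_i)\})$ that realises it through \eqref{eq:quantumoccumeasure}.

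For the construction I would take $\Hscr_i:=\Cbb^{|\Uscr_i|}$ with a fixed orthonormal basis $\{\ket{u_i}\}_{u_i\in\Uscr_i}$, fix a unit vector $\ket{\psi_i}\in\Hscr_i$ for each $i$, and let the shared state be the product state $\rho:=\bigotimes_{i\in\Nscr}\ket{\psi_i}\bra{\psi_i}$. The only genuine point is the single-agent realisation: for each $i$ and each observation $\xi_i$, the amplitude vector with entries $\sqrt{Q_i(u_i|\xi_i)}$ is a unit vector because $Q_i(\cdot|\xi_i)$ is a probability distribution, hence it is related to $\ket{\psi_i}$ by some unitary, so there is an orthonormal basis $\{\ket{e^i_{u_i}(\xi_i)}\}_{u_i}$ of $\Hscr_i$ with $\braket{e^i_{u_i}(\xi_i)|\psi_i}=\sqrt{Q_i(u_i|\xi_i)}$. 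Setting $P^i_{u_i}(\xi_i):=\ket{e^i_{u_i}(\xi_i)}\bra{e^i_{u_i}(\xi_i)}$ gives, for each $\xi_i$, a projective POVM with $\Tr\!\bigl(\ket{\psi_i}\bra{\psi_i}P^i_{u_i}(\xi_i)\bigr)=\bigl|\braket{e^i_{u_i}(\xi_i)|\psi_i}\bigr|^2=Q_i(u_i|\xi_i)$. Multiplying over $i$ in \eqref{eq:quantumoccumeasure} then gives $\Tr\!\bigl(\rho\bigotimes_i P^i_{u_i}(\xi_i)\bigr)=\prod_i Q_i(u_i|\xi_i)=Q(u|\xi)$, so $Q\in\Qscr$; thus $\Dscr\subseteq\Qscr$ and $\Lscr\subseteq\Qscr$.

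The cost inequality $J_\Lscr^*(D)\ge J_\Qscr^*(D)$ is then immediate since $J_\Sscr^*(D)=\inf_{Q\in\Sscr}J(Q;D)$ is non-increasing under enlargement of $\Sscr$; and the inclusion is strict because Proposition~\ref{prop:quantumadvantage} exhibits a quantum strategy whose cost is strictly below $J_\Lscr^*(D)$, which is impossible for any strategy in $\Lscr$ (on $\Lscr$ the linear functional $J(\cdot;D)$ has infimum $J_\Lscr^*(D)$). I do not expect a real obstacle here; the one place to be careful is the single-agent step, which works precisely because any two unit vectors of $\Cbb^{|\Uscr_i|}$ differ by a unitary — this lets the $\xi_i$-dependence be pushed entirely into the measurement basis while the shared state $\rho$ stays fixed across all observations, as the protocol of Section~\ref{sec:quantum_strategy_formulation} requires (the edge case $Q_i(u_i|\xi_i)=0$ merely makes $P^i_{u_i}(\xi_i)$ orthogonal to $\ket{\psi_i}$ and is harmless).
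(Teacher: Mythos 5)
Your proof is correct, and it shares the paper's skeleton---use convexity of $\Qscr$ (Proposition~\ref{prop:convexity}) to reduce the inclusion to a generating set of $\Lscr$---but the generating set and the embedding differ in a substantive way. The paper takes the smaller set $\Gamma$ of deterministic strategies (using $\Lscr=\conv(\Gamma)$) and embeds each $\gamma\in\Gamma$ trivially: one-dimensional Hilbert spaces $\Hscr_i$, $\rho=(1)$, and $P^i_{u_i}(\xi_i)=\delta(u_i,\gamma_i(\xi_i))(1)$, so that \eqref{eq:quantumoccumeasure} reproduces $\prod_i\delta(u_i,\gamma_i(\xi_i))$. You instead embed all of $\Dscr$, which requires the genuinely nontrivial single-agent step: realising an arbitrary distribution $Q_i(\cdot|\xi_i)$ as Born-rule probabilities of a projective measurement on a fixed state $\ket{\psi_i}$, with the $\xi_i$-dependence absorbed into the measurement basis via a unitary. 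That step is sound (the amplitude vector $(\sqrt{Q_i(u_i|\xi_i)})_{u_i}$ is a unit vector, hence unitarily related to $\ket{\psi_i}$), and the factorisation of the trace over the product state $\rho=\bigotimes_i\ket{\psi_i}\bra{\psi_i}$ gives exactly $\prod_iQ_i(u_i|\xi_i)$. What your route buys is a direct demonstration that local (behavioural) randomization is quantum-realisable with an unentangled state and no appeal to classical coin flips, which also connects nicely to the remark at the end of Appendix~\ref{app:qmprelim} on separable states; what the paper's route buys is economy, since the one-dimensional construction makes the verification immediate. Your additional observation that the inclusion is strict via Proposition~\ref{prop:quantumadvantage} goes beyond what the paper's proof of this proposition records, and is a correct justification of the strictness implicit in the statement.
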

\begin{proof}
	Since $\Qscr$ is convex from Proposition \ref{prop:convexity} and $\Lscr=\conv(\Gamma)$ from Theorem \ref{thm:detoptimum}, it suffices to show that $\Gamma\subset \Qscr$. So let $\gamma\in\Gamma$ be an arbitrary deterministic strategy. Take $Q=(\{\Hscr_i\}_i, \rho, \{P_{u_i}^i(\xi_i)\}_{i, u_i, \xi_i})$ with each $\Hscr_i$ as a unidimensional hilbert space $\Hscr$,  $\rho=(1)\in\Bscr(\Hscr)$ and $P_{u_i}^u(\xi_i)=\delta(u_i, \gamma_i(\xi_i))(1)\in\Bscr(\Hscr)$ for each $i, u_i$ and $\xi_i$. It is straightforward to verify that $Q\in\Qscr$. Also notice that 
	\begin{equation}
		Q(u|\xi)=\Tr(\rho\bigotimes_i P_{u_i}^i(\xi_i))=\prod_i \delta(u_i, \gamma_i(\xi_i))
	\end{equation}
	so that $Q\equiv \gamma$. Since $\gamma\in \Gamma$ was arbitrariliy chosen, it follows that $\Gamma\subset \Qscr$. The proposed is thus proved.
\end{proof}
\subsubsection{Inclusion within the no-signalling polytope}
\begin{proposition}
	All quantum strategies respect the no-signalling conditions, \ie, $\Qscr\subset\Nscr\Sscr$.
\end{proposition}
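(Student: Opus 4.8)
The plan is to verify directly, from the defining formula \eqref{eq:quantumoccumeasure}, that every $Q\in\Qscr$ meets the positivity, normalisation, and no-signalling requirements \eqref{eq:nosignallingconst} that cut out $\Nscr\Sscr$. First I would fix an arbitrary $Q=(\{\Hscr_i\}_{i\in\Nscr},\rho,\{P^i_{u_i}(\xi_i)\}_{i,\xi_i,u_i})\in\Qscr$, so that $Q(u|\xi)=\Tr(\rho\bigotimes_{i\in\Nscr}P^i_{u_i}(\xi_i))$. Positivity is immediate since $\rho\succeq0$ and a tensor product of PSD projectors is PSD, so the trace of the product is nonnegative. Normalisation follows by summing over all $u$ and using the POVM completeness relation $\sum_{u_i}P^i_{u_i}(\xi_i)=\Ibf_i$ together with multilinearity of $\otimes$, giving $\sum_uQ(u|\xi)=\Tr(\rho\bigotimes_i\Ibf_i)=\Tr\rho=1$.

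The crux is the no-signalling identity, and the single computation that does all the work is this: for a fixed agent $i$, marginalising over $u_{-i}$ and again applying completeness to every factor $j\neq i$,
\begin{equation*}
\sum_{u_{-i}}Q(u_i,u_{-i}|\xi_i,\xi_{-i})=\Tr\!\left(\rho\Big(P^i_{u_i}(\xi_i)\otimes\bigotimes_{j\neq i}\sum_{u_j}P^j_{u_j}(\xi_j)\Big)\right)=\Tr\!\left(\rho\Big(P^i_{u_i}(\xi_i)\otimes\bigotimes_{j\neq i}\Ibf_j\Big)\right).
\end{equation*}
The right-hand side involves only $\rho$, $u_i$ and $\xi_i$, so it is manifestly independent of $\xi_{-i}$; replacing $\xi_{-i}$ by any other $\xi'_{-i}$ yields the identical expression, which is precisely the condition \eqref{eq:nosignallingconst}. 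Combined with positivity and normalisation, this gives $Q\in\Nscr\Sscr$, and since $Q$ was arbitrary, $\Qscr\subseteq\Nscr\Sscr$. To note that the inclusion is strict I would invoke the standard fact that the non-local vertices $Q^{\alpha\beta\delta}$ of \eqref{eq:nsvertex} are PR-box-type correlations saturating the algebraic bound of the CHSH functional, hence violating Tsirelson's bound $2\sqrt2$ obeyed by every quantum $Q$ \cite{tsirelson_1,clauser1969chsh}; thus $Q^{\alpha\beta\delta}\in\Nscr\Sscr\setminus\Qscr$.

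I do not expect a genuine obstacle here: the argument is a one-line consequence of linearity of the trace and POVM completeness, and the only care needed is the tensor-factor bookkeeping when the $i$-th tensor slot is separated out from the others (and, if one wants to be fastidious, checking that the partial marginal really equals the trace against $P^i_{u_i}(\xi_i)\otimes\bigotimes_{j\neq i}\Ibf_j$, which is just associativity of $\otimes$ and linearity of $\Tr$). The point worth emphasising in the surrounding text is the physical reading of the computation — it is exactly the statement that the collapse of the shared state $\rho$ under a local measurement transmits no information, in accordance with special relativity, as anticipated in the paragraph preceding the proposition.
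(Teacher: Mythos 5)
Your proof is correct and follows essentially the same route as the paper's: marginalise over $u_{-i}$, invoke the POVM completeness relation $\sum_{u_j}P^j_{u_j}(\xi_j)=\Ibf_j$ for each $j\neq i$, and observe that the resulting expression $\Tr(\rho\, P^i_{u_i}(\xi_i)\otimes\bigotimes_{j\neq i}\Ibf_j)$ no longer depends on $\xi_{-i}$. The extra checks of positivity and normalisation, and the remark on strictness via Tsirelson's bound, are sound additions but not part of the paper's argument for this proposition.
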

\begin{proof}
	To show this, consider an arbitrary $(\{\Hscr\}_i, \rho, \{P^i_{u_i}(\xi_i)\}_{i, u_i, \xi_i})=Q\in\Qscr$ and notice that it indeed obeys the no-signalling conditions \eqref{eq:nosignallingconst}:
	\begin{align*}
		\sum_{u_{-i}} Q(u_i, u_{-i}|\xi_i, \xi_{-i})&=\sum_{u_{-i}} \Tr(\rho\bigotimes_jP_{u_j}^j(\xi_j))\\
		&= \Tr(\rho P_{u_i}^i(\xi_i)\bigotimes_{j\neq i} \Ibf_j)
		= \sum_{u_{-i}} \Tr(\rho P_{u_i}^i(\xi_i)\bigotimes_{j\neq i}P_{u_j}^j(\xi'_j))= \sum_{u_{-i}} Q(u_i, u_{-i}|\xi_i, \xi'_{-i}) \ \forall u_i, \xi_{i}, \xi_{-i}, \xi_{-i}'.
	\end{align*}
	where the second equality follows from $\sum_{u_j} P_{u_j}^j(\xi_j)=\Ibf_j$ since $\{P_{u_j}^j(\xi_j)\}_{u_j}$ form a POVM. Hence $Q\in\Qscr \implies Q\in\Nscr\Sscr$ \ie $\Qscr\subset \Nscr\Sscr$.
\end{proof}

\subsubsection{Relative geometry of $\Lscr$, $\Qscr$ and $\Nscr\Sscr$, and its implications}
It is clear that $\Lscr\subset \Qscr\subset \Nscr\Sscr$ and existence of a strict quantum advantage amounts to the first inclusion $\Lscr\subset\Qscr$ being strict. In fact, the second inclusion $\Qscr\subset \Nscr\Sscr$ is also strict. This follows from each of the following two results in quantum information theory. First, it is known that $\Qscr$ is not a polytope \cite{goh2018geometry}. Second, the non-local vertices of $\Nscr\Sscr$ lie outside of $\Qscr$ \cite{ramanathan2016noquantum}. It also holds that  $\Lscr$ and $\Nscr\Sscr$ share some faces, which are essentially convex hulls over certain collections of the local vertices. A static team problem induces an optimisation of a linear objective over each of $\Lscr, \Qscr$ and $\Nscr\Sscr$ and its orientation determines if a quantum, or a no-singnalling advantage manifests. The following figure depicts the crux of our discussion through sliceplots \textit{representative} of the relative geometry of these strategic spaces.
\begin{figure}[h]
	\centering
	\includegraphics[scale=0.5]{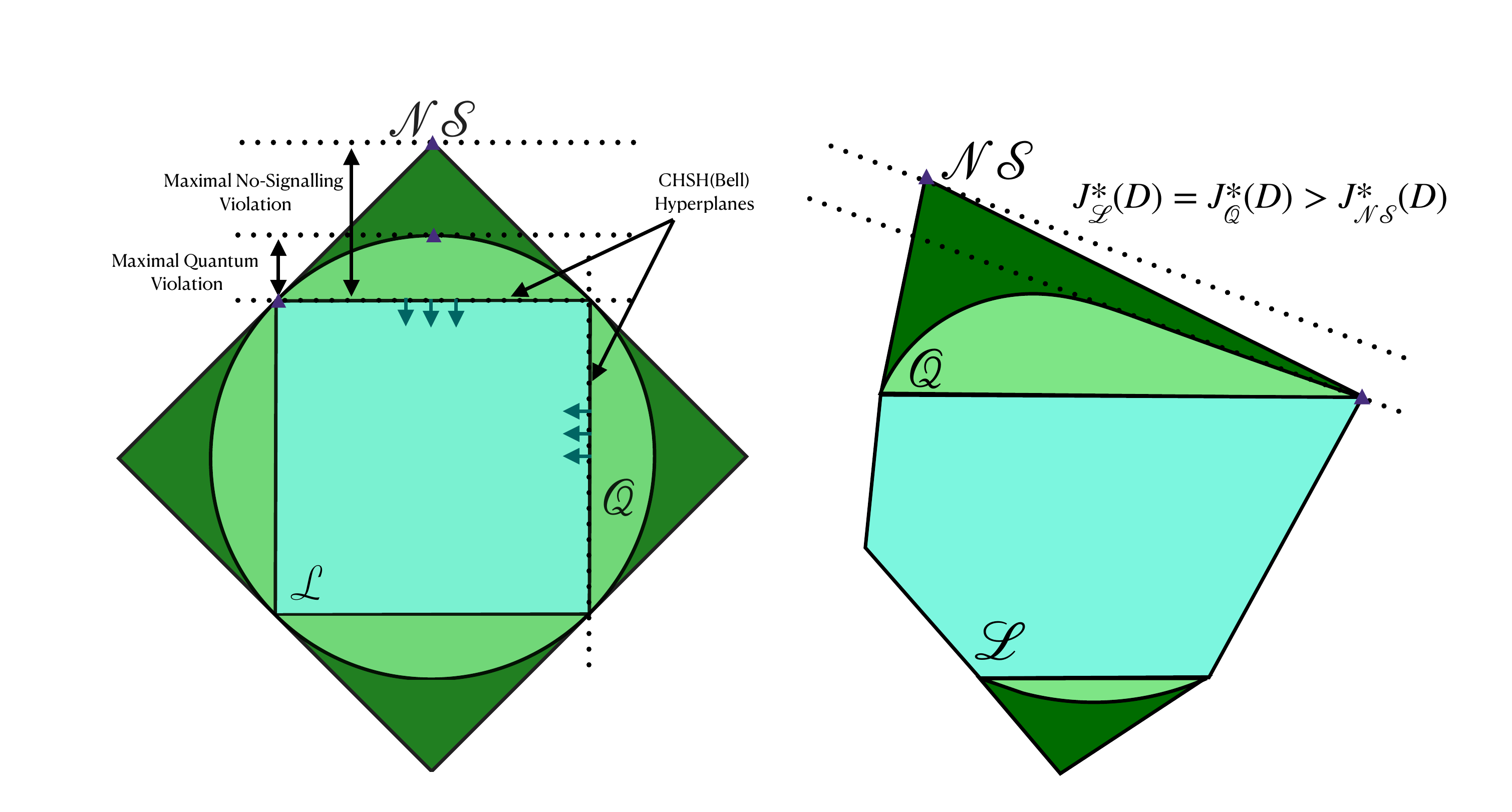}
	\caption{Representative sliceplots indicating relative geometry of the three strategic spaces}
	\label{fig:sliceplots}
\end{figure}

In Figure \ref{fig:sliceplots}, the former plot depicts the slice that shows the maximal quantum violation for two agents, with binary action and observation spaces. This slice features the popular problem known as the CHSH game which corresponds to the maximally violating objective \cite{clauser1969chsh}-- the CHSH objective with a rich history. This objective specifies a face of the local polytope $\Lscr$ and provides an orientation along which, the quantum optimum deviates maximally from the classical one. 
In the latter plot, we emphasize that the relative geometry of three spaces is rather exotic. In particular, there exist objectives oriented such that they admit a no-signalling advantage, and yet no quantum advantage.  Such problems have been exemplified by certain ranges of $\lambda$ in the second part of Figure \ref{fig:varyinfo}.

We conclude this subsection by noting that a general decision problem with a set of agents $\Nscr$, action and observation spaces $\Uscr_i, $ $\Xi_i$, a prior on observations $\Pbb: \prod_i\Xi_i\to [0, 1]$ and a cost $\ell: \prod_i \Uscr_i\times\Xi_i\to \Real$ admits an expected cost $J(Q;D)=\Jscr\t \Qbf$ is a linear objective with $\Qbf\in\Real^{\prod_i |\Xi_i||\Uscr_i|}$ as a vector with occupation measures $Q(u|\xi)$ as its components and $\Jscr$:
\begin{equation}
	\Jscr_{u, \xi}=\Pbb(\xi)\ell(u, \xi)
\end{equation}
determines the orientation of the cost objective in the strategic space.  Characterizing the prior distributions $\Pbb$ and the cost functions $\ell$ that induce an objective which admits a quantum advantage is thus of immediate interest. In the next section, we present a set of numerical results that demonstrate how the presence of a quantum advantage is swayed by parameters of decision-theoretic interest.

\section{Concluding Remarks}\label{sec:conclude}
We have demonstrated that in a decentralized estimation problem, the use of a quantum architecture allows a breach of limitations imposed upon the set of strategies generated through passive common randomness, while still respecting the information structure imposed by the control problem, a phenomenon we call the quantum advantage. In particular for an instance of decentralised estimation, it was observed that quantum mechanics allows for correlations among the agents' actions that more richly correlation actions of both agents, in contrast to strategies implemented using passive common randomness.

We have further demonstrated through a numerical investigation how elements of decision theoretic character such that the information  and the cost play a pivotal role in the appearance of a quantum advantage in the problem. The relative geometry of the pertaining strategic spaces was discussed to offer some geometric intuition to fluctuations in the presence of a quantum advantage. Our exposition has motivated the paradigm of entanglement assisted quantum strategies for decentralized control and decision-making in scenarios where cooperative decentalised agents have access to asymmetric information about the state of nature. 

%

\appendix

\appendix
\section{Quantum Mechanics}\label{app:qmprelim}
Following~\cite{preskillQIC}, we provide a preliminary introduction to the postulates of quantum mechanics, directly employed in our demonstration. While will work with the density matrix formalism of quantum mechanics, we also present here, the state-vector formalism for organic development.
\subsection{Hilbert Spaces and Operators}
A Hilbert space $\mathcal{H}$ is a vector space over the set of complex numbers $\mathbb{C}$. We use Dirac's bra-ket notation $\ket{\psi}\in \mathcal{H}$ to denote a vector in the Hilbert space $\mathcal{H}$. It carries an inner product $\braket{\psi|\phi}$ where $\bra{\psi}\in\mathcal{H}'$ denotes the co-vector of $\ket{\psi}$ and $\mathcal{H}'$ denotes the dual space of $\mathcal{H}$. The inner product maps an ordered pair of vectors to $\mathbb{C}$ and satisfies
\begin{enumerate}[i)]
	
	\item Positivity: $\braket{\psi|\psi}>0\ \forall \ket{\psi}\neq 0$
	
	\item Linearity: $\bra{\psi}(a\ket{\phi_1}+b\ket{\phi_2})=a\braket{\psi|\phi_1}+b\braket{\psi|\phi_2}\ \forall\ a,b\in\mathbb{C}$ 
	
	\item Skew Symmetry: $\braket{\phi|\psi}=\braket{\psi|\phi}^*$ where $^*$ denotes complex conjugation.
	
\end{enumerate}
A Hilbert space is complete in the norm $||\psi||=\braket{\psi|\psi}^{1/2}$ which we will work with hereon. An operator on $\Hscr$ is a map $\mathcal{H}\to\mathcal{H}$. We denote the space of operators on $\Hscr$ by $\Bscr(\Hscr)$. Consider a \textit{linear} operator $ \boldsymbol{M}\in\Bscr(\Hscr)$ and define its adjoint $\boldsymbol{M}^\dagger$ by
$$\braket{\psi|\boldsymbol{M}\phi}=\braket{\boldsymbol{M}^{\dagger}\psi|\phi}\ \ \forall \ket{\phi}, \ket{\psi}\in\mathcal{H}.$$
$\boldsymbol{M}$ is self-adjoint if $\boldsymbol{M}=\boldsymbol{M}^\dagger$. For each self-adjoint $\boldsymbol{M}$, the spectral theoren provides a spectral representation 
$$\boldsymbol{M}=\sum_{i}a_i\ket{i}\bra{i}$$
where $\{\ket{i}\}$ is an orthonormal basis in $\mathcal{H}$. Each $\ket{i}$ is then an eigenvector of the operator $\boldsymbol{M}$ with the corresponding real eigenvalue $a_i\in \Real$.

\subsection{Postulates of Quantum Mechanics}\label{qpostulates}
\noindent\textbf{Postulate 1. States of quantum systems:}
In the state vector formalism, a pure quantum mechanical state is described by a unit vector $\ket{\psi}\in\Hscr$ within some Hilbert space $\Hscr$. On the other hand, in the density matrix formalism of quantum mechanics, a state is a positive semi-definite operator $\rho$ on a complex Hilbert space with a unit trace and provides complete description of a physical system.  For different Hilbert spaces $\mathcal{H}_A$ and $\mathcal{H}_B$, we employ subscripts to denote vectors and density matrices in these spaces: $\ket{\psi_A}\in\Hscr_A$, $\ket{\psi_B}\in\Hscr_B$, $\rho_A\in\mathcal{H}_A$ and $\rho_B\in\mathcal{H}_B$. 

For each pure state $\ket{\psi}\in\Hscr$, we have a density matrix $\rho=\ket{\psi}\bra{\psi}\in\Bscr(\Hscr)$ that equivalently describes the same state. Note for consistency that $\rho\succeq 0$ by definition and $\Tr(\rho)=\braket{\psi|\psi}=1$. Conversely however, every density matrix $\rho\in\Bscr(\Hscr)$ does not admit a decomposition $\rho=\ket{\psi}\bra{\psi}$ for some $\ket{\psi}\in\Hscr$. Such density matrices are understood to represent mixed quantum states (as against pure quantum states). To understand mixed quantum states, first note that for each positive definite $\rho\in\Bscr(\Hscr)$ with a unit trace, the spectral theorem allows a decomposition
\begin{equation}
	\rho=\sum p_i \ket{\psi^i}\bra{\psi^i}=:\sum_i p_i\rho_i;\ p_i\geq 0\ \forall i;\ \sum_i p_i=1.
\end{equation}
where $\{\ket{\psi^i}\}$ is an orthonormal basis of $\Hscr$. We understand that such a $\rho$ represents a classically random state, distributed as $p_i=\Pbb(\rho_i)$ on the set $\{\rho_i\}_i$. This means that the systemic state is $\rho_i$ with probability $p_i$ for each $i$.


\noindent\textbf{Postulate 2. Observables of quantum systems:} An observable of a quantum system is a property that can be measured (in principle). An observable  $\boldsymbol{M}\in\Bscr(\Hscr)$ in quantum mechanics is a self-adjoint operator \ie $\boldsymbol{M}^\dagger=\boldsymbol{M}$. Suppose that for a basis $\{\ket{i}\}_i$ of $\Hscr$, we have  the spectral decomposition $\boldsymbol{M}=\sum_i a_i\ket{i}\bra{i}$. Let $\Sscr(\boldsymbol{M}):=\{a_j\}_j$ denote the set of eigenvalues of $\boldsymbol{M}$. Define $\Vscr(a):=\{\ket{i}:\boldsymbol{M}\ket{i}=a \ket{i}\}$ as the set of eigenvectors of an eigenvalue $a$. Now, to each $a_i\in\Sscr(\boldsymbol{M})$, associate a projection $P_i=\sum_{\ket{j}\in \Vscr(a_i)}\ket{j}\bra{j}$. Then, the spectral decomposition assumes the following form:
\begin{equation}
	\boldsymbol{M}=\sum_{a_i\in \Sscr(\boldsymbol{M})} a_i P_i.
\end{equation}
Note that since the normalised eigenvectors of $\boldsymbol{M}$ form an orthonormal basis of $\Hscr$, it follows that $\sum_i P_i=\Ibf$. The set $\{P_i\}_{i}$ is called a positive operator valued measure (\textit{POVM}) associated to the observable $\boldsymbol{M}$.

\noindent\textbf{Postulate 3. Measurements in quantum mechanics:}
Consider a measurement of an observable $\boldsymbol{M}\in\Bscr(\Hscr)$ on a systemic state $\rho\in\Bscr(\Hscr)$. Such a measurement results in an outcome $a_i\in\Sscr(\boldsymbol{M})$ with probability	 $\Pbb(a_i)=\Tr(\rho P_i)$ and the systemic state post measurement immediately collapses to 
$\rho_i:=P_i\rho P_i/\Tr(P_i\rho P_i)$. This abrupt transition into $\rho_i$ post measurement presents an exotic phenomenon of quantum mechanics known as the wave function collapse. The expected measurement outcome is then given by $\braket{a}=\sum_{a_i} a_i\Pbb(a_i)=\sum_{a_i} a_i\Tr(\rho P_i)=\Tr(\rho \boldsymbol{M})$.
In the case when $\rho=\ket{\psi}\bra{\psi}$ is pure, these relations translate to $\Pbb(a_i)=|\braket{P_i|\psi}|^2$, $\ket{\psi}_i=P_i\ket{\psi}/|P_i\ket{\psi}|$ where $\rho_i=:\ket{\psi}_{ii}\bra{\psi}$ and $\braket{a}=\braket{\psi|\boldsymbol{M}|\psi}$.

\noindent\textbf{Postulate 4. Dynamics of quantum systems:} With the parameter $t$ representing time parametrising the time evolution of a quantum state $\rho(t)$, the dynamics of a (closed) quantum system is specified by  the Schrödinger equation.  We have, the dynamics of the system specified as
\begin{equation}
	\rho(t')=U(t', t)\rho(t) U(t', t).
\end{equation}
By postulate 1, the operator $U(t', t)$ is unitary (since quantum states are unit vectors in Hilbert space, they evolve on a unit sphere) is given by
$U(t', t)=\exp{\left(-\int_t^{t'}\boldsymbol{H}(s)ds\right)}$where $\boldsymbol{H}(s)$ is a special self-adjoint operator known as the Hamiltonian of the system. Of course, for a pure $\rho(t)=\ket{\psi(t)}\bra{\psi(t)}$, the evolution can be specified in the state vector formalism as $\ket{\psi(t')}=U(t', t)\ket{\psi(t)}$.

\noindent\textbf{Postulate 5. Composition of quantum systems:} We begin within the state-vector formalism and briefly deal with mixed states later on.
Consider systems $A$ and $B$ respectively whose pure states are described by rays or unit vectors in Hilbert spaces $\mathcal{H}_A$ and $\mathcal{H}_B$. Then, pure states of the composite system $AB$ are described by rays in the Hilbert space $\mathcal{H}_{AB}$ given by the tensor product of the two spaces, $\mathcal{H}_{AB}=\mathcal{H}_A\otimes\mathcal{H}_B$. One immediate consequence of this postulate is that if system $A$ is in state $\ket{\phi_A}$ and system $B$ is in state $\ket{\psi_B}$ then the state of the composite system $AB$ is $\ket{\phi_A}\otimes\ket{\psi_B}$. However to understand the complete quantum mechanical picture of the composite system $AB$, we first understand the tensor product of the Hilbert spaces. Consider an $m$ dimensional orthonormal basis $\{\ket{i_A}\}$ for $\mathcal{H}_A$ and an $n$ dimensional orthonormal basis $\{\ket{\mu_B}\}$ for $\mathcal{H}_B$. Then the states $\ket{i_A, \mu_B}:=\ket{i_A}\otimes\ket{\mu_B}$ form an $m\times n$ dimensional orthonormal basis for $\mathcal{H}_{AB}$. The inner product in $\mathcal{H}_{AB}$ is defined by $\braket{i_A, \mu_B|j_A, \nu_B}=\delta_{ij}\delta_{\mu\nu}$
where $\delta_{ij}=1$ for $i=j$ and zero otherwise. Observables on the composite system are operators on $\mathcal{H}_{AB}$. Further, if an operator is rewritable as $\boldsymbol{T}_A\otimes\boldsymbol{V}_B$, it applies $\boldsymbol{T}_A$ on system $A$ and $\boldsymbol{V}_B$ on system $B$. Thus,
\begin{equation}
	\boldsymbol{T}_A\otimes\boldsymbol{V}_B\ket{i_A, \mu_B}=\boldsymbol{T}_A\ket{i_A}\otimes\boldsymbol{V}_B\ket{\mu_B}=\sum_{j, \nu}\ket{j_A, \nu_B}(T_A)_{ji}(V_B)_{\nu\mu}.
\end{equation}
Now in the density matrix formalism, a density matrix $\rho\in\Bscr(\Hscr_{AB})$ specifies a possibly mixed state of the composite system. The action of an operator $\boldsymbol{O}\in \Bscr(\Hscr_{AB})$ on this state of the system is then specified as $\boldsymbol{O}\rho\boldsymbol{O}$. Let us now look at how the behaviour of measurements on composite systems is postulated. 
\begin{enumerate}[a)]
	
	\item \textbf{Measurement on the composite system:} If an operator $\boldsymbol{M}_{AB}$ on $\mathcal{H}_{AB}$ is measured on the composite system in state
	$\ket{\psi_{AB}},$ postulate $3$ is directly applicable. Let $\{\ket{k_{AB}}\}_{k=1}^{mn}$ be the set of normalised eigenvectors of $\boldsymbol{M}_{AB}$, which span $\mathcal{H}_{AB}$, and we have a corresponding set of eigenvalues $\{a_k\}_{k=1}^{r}$ of $\boldsymbol{M}_{AB}$ such that by  the spectral decomposition (as in postulate 2), 
	$$\boldsymbol{M}_{AB}=\sum_{k}a_kP_k^{AB}.$$
	Then, upon measurement, the probability of measuring the eigenvalue $a_i$ is given by
	$$\Pbb(a_i)=||P_i^{AB}\ket{\psi_{AB}}||^2,$$ 
	and the composite system correspondingly collapses to the state
	$$\frac{P_i^{AB}\ket{\psi_{AB}}}{||P_i^{AB}\ket{\psi_{AB}}||}.$$
	Suppose that the measurement operator is rewritable as $\boldsymbol{M}_{AB}=\boldsymbol{M}_A\otimes\boldsymbol{M}_B$ where $\boldsymbol{M}_A$ and $\boldsymbol{M}_B$ are observables on $\mathcal{H}_A$ and $\mathcal{H}_B$ respectively. Suppose that we have their respective spectral decompositions,
	$$\boldsymbol{M}_A=\sum_i a_i P^A_i;\ \boldsymbol{M}_B=\sum_\mu b_\mu P^B_\mu.$$
	We can thus express,
	\begin{equation}
		\boldsymbol{M}_{AB}=\sum_{i, \mu} a_i b_\mu P^A_i\otimes P^B_\mu.\label{compdecomp}
	\end{equation}
	In physical situations involving such composite measurements, it may be possible to separately infer eigenvalues $a_i, b_\mu$ upon measurement. Thus we have the joint probability of measurement,
	\begin{equation}
		\Pbb(a_i, b_\mu)= ||P^A_i\otimes P^B_\mu\ket{\psi_{AB}}||^2\label{compprob}
	\end{equation}
	and the system thus collapses to the state
	\begin{equation}
		\ket{\psi_{AB}}^{i, \mu}=\frac{P^A_i\otimes P^B_\mu\ket{\psi_{AB}}}{||P^A_i\otimes P^B_\mu\ket{\psi_{AB}}||}.\label{compcol}
	\end{equation}
	Correspondingly for a mixed state $\rho_{AB}\in\Bscr(\Hscr_{AB})$, we have the probability of measurements and and the post measurement state given by
	\begin{equation}
		\Pbb(a_i, b_\mu)= \Tr(\rho_{AB} P_i^A\otimes P_\mu^B);\ \rho_{AB}^{i, \mu}=\frac{P_i^A\otimes P_\mu^B \rho_{AB} P_i^A\otimes P_\mu^B}{\Tr(P_i^A\otimes P_\mu^B \rho_{AB} P_i^A\otimes P_\mu^B)}
	\end{equation}
	
	\item \textbf{Measurement on a subsystem:} Suppose that the composite system is in state $\ket{\psi_{AB}}$ and an operator $\boldsymbol{M}_A$ on $\mathcal{H}_A$ is measured. Let $\{P_i^A\}_{i=1}^m$ be the \textit{POVM} corresponding to $\boldsymbol{M}_A$, and we have a corresponding set of eigenvalues $\{a_i\}_{i=1}$ so that by spectral theorem, we have the following decomposition for $\boldsymbol{M}_A$,
	$$\boldsymbol{M}_A=\sum_i a_i P_i^A.$$
	Measurement of $\boldsymbol{M}_A$ on subsystem $A$ amounts to a measurement of $\boldsymbol{M}_{AB}=\boldsymbol{M}_A\otimes\boldsymbol{I}_B$ on the composite system. Since the \textit{POVM} corresponding to the identity operator $\boldsymbol{I}_B$ is the singleton set $\{\boldsymbol{I}_B\}$ itself, we have the following decomposition of $\boldsymbol{M}_{AB}$ from (\ref{compdecomp}):
	$$\boldsymbol{M}_{AB}=\sum_{i} a_i P^A_i\otimes I_B.$$
	Such a measurement then results in an outcome $a_i$ with probability 
	$$\Pbb(a_i)=||P_i^A\otimes\boldsymbol{I}_B\ket{\psi_{AB}}||^2$$
	and the system correspondingly collapses to the state
	$$\frac{P_i^A\otimes\boldsymbol{I}_B\ket{\psi_{AB}}}{||P_i^A\otimes\boldsymbol{I}_B\ket{\psi_{AB}}||}.$$
	On the other hand, in the density matrix formalism, with a mixed state $\rho_{AB}\in\Bscr(\Hscr_{AB})$, the above measurement probabilities and the collapsed state post measurement is given by
	\begin{equation}
		\Pbb(a_i)=\Tr(\rho_{AB} P_i^A\otimes \boldsymbol{I}_B);\ \rho_{AB}^i=(\rho_{AB} P_i^A\otimes I_B)/\Tr(\rho_{AB} P_i^A\otimes I_B).
	\end{equation}
	
	\item \textbf{Sequential measurements on the two subsystems:} Consider a measurement of operator $\boldsymbol{M}_A$ on $\mathcal{H}_A$ immediately followed by a measurement of the operator $\boldsymbol{N}_B$ on $\mathcal{H}_B$ (This amounts to a measurement of  $\boldsymbol{M}_A\otimes \boldsymbol{I}_B$ immediately followed by a measurement of  $\boldsymbol{I}_A\otimes \boldsymbol{N}_B$ on the composite system). We show that this is equivalent to the measurement of $$\boldsymbol{M}_A\otimes \boldsymbol{N}_B=\boldsymbol{M}_A\otimes \boldsymbol{I}_B\times \boldsymbol{I}_A\otimes \boldsymbol{N}_B$$ on the composite system. Suppose that the operators can respectively decomposed spectrally as
	\begin{equation}
		\boldsymbol{M}_A=\sum_i a_i P_i^A,\ 
		\boldsymbol{N}_B=\sum_\mu b_\mu P_\mu^B.\label{Opdecomp}
	\end{equation}
	The first measurement then results in an outcome $a_i$ with probability
	$$\Pbb(a_i)=||P_i^A\otimes\boldsymbol{I}_B\ket{\psi_{AB}}||^2$$
	and the system correspondingly collapses to the state
	$$\ket{\phi_{AB}}=\frac{P_i^A\otimes\boldsymbol{I}_B\ket{\psi_{AB}}}{||P_i^A\otimes\boldsymbol{I}_B\ket{\psi_{AB}}||}.$$
	A similar account for the mixed state $\rho_{AB}$ is provided through
	\begin{equation}\label{eq:partialmeadensity}
		\Pbb(a_i)=\Tr(\rho_{AB} P_i^A\otimes \boldsymbol{I}_B);\ \rho_{AB}^i=(\rho_{AB} P_i^A\otimes I_B)/\Tr(\rho_{AB} P_i^A\otimes I_B).
	\end{equation}
	The immediate, second measurement then takes place on this collapsed state, following which the probability of outcome $b_\mu$ is
	$$\Pbb^i(b_\mu)=||\boldsymbol{I}_A\otimes P_\mu^B\ket{\phi_{AB}}||^2=\frac{||\boldsymbol{I}_A\otimes P_\mu^B\times P_i^A\otimes\boldsymbol{I}_B\ket{\psi_{AB}}||^2}{||P_i^A\otimes\boldsymbol{I}_B\ket{\psi_{AB}}||^2}=\frac{||P_i^A\otimes P_\mu^B\ket{\psi_{AB}}||^2}{||P_i^A\otimes\boldsymbol{I}_B\ket{\psi_{AB}}||^2}.$$
	Thus the joint probability of measuring the pair of eigenvalues $a_i, b_\mu$ is
	\begin{equation}
		\Pbb(a_i, b_\mu)=\Pbb(a_i)\Pbb^i(b_\mu)=||P_i^A\otimes P_\mu^B\ket{\psi_{AB}}||^2\label{seqmeaprob}
	\end{equation}
	and the final state of the system after the two sequential measurements is
	\begin{equation}
		\frac{\boldsymbol{I}_A\otimes P_\mu^B\ket{\phi_{AB}}}{||\boldsymbol{I}_A\otimes P_\mu^B\ket{\phi_{AB}}||}=\frac{P_i^A\otimes P_\mu^B\ket{\psi_{AB}}}{||P_i^A\otimes P_\mu^B\ket{\psi_{AB}}||}.\label{seqmeacol}
	\end{equation}
	Notice that the measurement probability (\ref{seqmeaprob}) and the collapsed state of the composite system (\ref{seqmeacol}) after sequential measurements does correspond with the probability of composite measurement of the operator $\boldsymbol{M}_A\otimes\boldsymbol{N}_B$ (\ref{compprob}) and the corresponding collapsed state (\ref{compcol}). This establishes our claim that the sequential measurement is equivalent to the composite measurement of $\boldsymbol{M}_A\otimes\boldsymbol{N}_B$.
	A similar line of arguments carries over to the density matrix formalism to establish our claim. In particular, we evaluate using \eqref{eq:partialmeadensity}
	\begin{equation}
		\Pbb^i(b_\mu)=\Tr(\rho^i_{AB}\boldsymbol{I}_A\otimes P_\mu^B)=\Tr(\rho_{AB} P_i^A\otimes \boldsymbol{I}_B \cdot  \boldsymbol{I}_A\otimes P_\mu^B)/\Tr(\rho_{AB} P_i^A\otimes \boldsymbol{I}_B)
	\end{equation}
	so that $\Pbb(a_i, b_\mu)=\Pbb^i(b_\mu)\Pbb(a_i)=\Tr(\rho_{AB} P^A_i\otimes P^B_\mu)$, thus demonstrating our claim for mixed states in general, within the density matrix formalism. Consequently, generalisation of systemic composition to $n>2$ systems becomes rather immediate.
\end{enumerate}
These five postulate provide a complete formulation of quantum mechanics. We now look at an exotic phenomenon in quantum physics that arises from postulates 3 and 5. 

\subsection{Entanglement in Composite Quantum Systems.}
Let $\{\Hscr_i\}_i$ be the Hilbert spaces corresponding to quantum systems $\{i\}$. A pure quantum state $\ket{\psi}\in\bigotimes_i\Hscr_i=:\Hscr$ is entangled if there exist no $\ket{\psi_i}\in\Hscr_i$ such that $\ket\psi=\bigotimes_i\ket{\psi_i}$. Otherwise, $\ket{\psi}$ is entangled. A mixed state $\rho\in\Bscr(\Hscr)$ is not entangled if there exist $\rho_i\in\Bscr(\Hscr_i)$ such that $\rho=\bigotimes_i\rho_i$. A general claim for mixed states otherwise cannot be made with any simplicity.

To exemplify entanglement, assume orthonormal bases $\{\ket{+_A}, \ket{-_A}\}$ and $\{\ket{0_B}, \ket{1_B}, \ket{2_B}\}$  for spaces $\mathcal{H}_A$ and $\mathcal{H}_B$ so that the combined system $AB$ is described in $\mathcal{H}_{AB}$ with orthonormal basis $\{\ket{+_A, 0_B}, \ket{+_A, 1_B}$ $, \ket{+_A, 2_B}, \ket{-_A, 0_B},$ $\ket{-_A, 1_B}, \ket{-_A, 2_B}\}$. Consider the following state of the composite system $AB$
\begin{equation}
	\ket{\psi_{AB}}=\frac{1}{2}(\ket{+_A, 0_B}+\ket{-_A, 1_B}+\sqrt{2}\ket{-_A, 2_B})\label{entstateab}; \ \rho=\ket{\psi_{AB}}\bra{\psi_{AB}}.
\end{equation}
It can be shown with simple algebraic arguments that there exist no states $\ket{\phi_A}\in\mathcal{H}_A$ and $\ket{\chi_B}\in\mathcal{H}_B$ such that $\ket{\psi_{AB}}=\ket{\phi_A}\otimes \ket{\chi_B}$. We call $\ket{\psi_{AB}}$ an \textit{entangled state} of the composite system $AB$.  Quantum entanglement is a natural consequence of postulate 5, there exist quantum states $\sum_{i, \mu} u_{i\mu}\ket{i_A, \mu_B}\in \mathcal{H}_{AB}$ such that for all $\{w_i\}$ and $\{v_\mu\}$, we have 
$$\sum_{i, \mu} u_{i\mu}\ket{i_A, \mu_B}\neq\sum_i w_i\ket{i_A}\otimes\sum_\mu v_\mu\ket{\mu_B}.$$

\noindent\textit{Remark:} Suppose that an unentangled quantum state, $\rho_{AB}=\ket{\psi_{AB}}\bra{\psi_{AB}}$ separates as $\rho_A\otimes\rho_B$ with $\rho_i=\ket{\psi_i}\bra{\psi_i}$. 
Then any quantum strategy that rests upon such a state is behaviourally randomized:
\begin{equation}
	Q(u|\xi)=\Tr\left(\rho_{AB} P_{u_A}^A(\xi_A)\otimes P_{u_B}^B(\xi_B)\right)=\Tr\left(\rho_A  P_{u_A}^A(\xi_A)\right)\Tr\left(\rho_B P_{u_B}^B(\xi_B)\right)=Q(u_A, u_B|\xi_A, \xi_B).
\end{equation}

\bibliographystyle{unsrt}  

\bibliography{ref}

\end{document}